\newcommand{\upchi}{\raise1pt\hbox{$\chi$}}
\newcommand{\R}{{\mathord{\mathbb R}}}
\newcommand{\C}{{\mathord{\mathbb C}}}
\newcommand{\Z}{{\mathord{\mathbb Z}}}
\newcommand{\N}{{\mathord{\mathbb N}}}
\newcommand{\mfr}[2]{{\textstyle\frac{#1}{#2}}}
\newcommand{\iint}{\mathop{\displaystyle\int\!\!\int}}
\newcommand{\at}{{\mathord{\widetilde{a}}}}
\newcommand{\cD}{{\mathord{\cal D}}}
\newcommand{\cE}{{\mathord{\cal E}}}
\newcommand{\cK}{{\mathord{\cal K}}}
\newcommand{\cM}{{\mathord{\cal M}}}
\newcommand{\cH}{{\mathord{\cal H}}}
\newcommand{\cF}{{\mathord{\cal F}}}
\newcommand{\cR}{{\mathord{\cal R}}}
\newcommand{\Tr}{{\mathord{\rm Tr}}}
\newcommand{\bA}{{\mathord{\vec{\bm A}}}}
\newcommand{\bB}{{\mathord{\vec{\bm B}}}}
\newcommand{\bE}{{\mathord{\vec{\bm E}}}}
\newcommand{\bsigma}{{\mathord{\bm\sigma}}}
\newcommand{\balpha}{{\mathord{\bm\alpha}}}
\begin{document}
\title*{Quantum Coulomb gases} %
\author{Jan Philip Solovej}
\institute{Jan Philip Solovej \at
    Department of Mathematics, University of Copenhagen\\
    Universitetsparken 5,
    DK-2100 Copenhagen, Denmark,
    \email{solovej@math.ku.dk}}
\maketitle

\setcounter{minitocdepth}{3}
\dominitoc

\section{Introduction}

  Ordinary matter is composed of electrons (negatively charged) and
  nuclei (positively charged) interacting via electromagnetic  forces. The
  electric potential between two such particles of charges $Q$ and $Q'$ located at
  $r$ and $r'$ in $\R^3$ is $QQ'/|r-r'|$. The Coulomb potential poses two
  difficulties: (1) The local singularity and (2) its long range. One
  has to understand why the local singularity does not cause
  instabilities and why the long range does not have strong
  macroscopic effects. One of the first major triumphs of the theory
  of quantum mechanics is the explanation it gives of the stability of
  the hydrogen atom (and the complete description of its spectrum) and
  of other microscopic quantum Coulomb systems. 
  It, surprisingly, took
  nearly forty years before the question of stability of everyday
  macroscopic objects was even raised. 
  The rigorous answer to the
  question came shortly thereafter in what came to be known as the
  {\it Theorem on Stability of Matter} proved first by Dyson and Lenard and
  later, in a more simple and transparent way by Lieb and
  Thirring. Since these seminal works, the proof of stability of
  matter has been extended to several different settings, including
  relativistic systems and systems in the presence of dynamic
  electromagnetic fields.

  We will, in particular, discuss the importance of particle
  statistics, i.e., whether the particles are bosons or fermions.
  Using Bogolubov's theory for Bose gases Dyson concluded that charged
  bosonic particles would be macroscopically unstable. That
  Bogolubov's theory gives the correct description of charged Bose
  systems was proved in a series of papers by the author 
  in collaboration with E.H.\ Lieb. 

  Having proved stability of matter the next question is whether
  one can establish the thermodynamics of charged systems, i.e., the
  existence of the thermodynamic limit. This was originally achieved
  by Lieb and Lebowitz. We will describe a new approach to the existence
  of the thermodynamic limit, which applies to many different quantum
  Coulomb systems, possibly in the presence of an underlying lattice
  structure. This generalizes earlier work of Fefferman. 
  
  An important theme in these notes
  is the use of functional inequalities, among which a prominent role
  is played by the Lieb-Thirring inequality.

  Another important
  theme will be how to control the screening of the Coulomb potential.
   
  The notes are organized as follows. In Section~\ref{sec:classical}
  we derive the  classical Hamiltonian for charged particles interacting
  with electromagnetic fields. In Section~\ref{sec:quantization} we
  discuss different ways of quantizing the system. We may quantize
  the particles and leave the fields unquantized or quantize both the
  particles and the fields. Moreover, we may consider the particles
  relativistically or classically. This will give a variety of different
  models. In Section~\ref{sec:stability} we discuss stability 
  both in the sense of stability of individual atoms and in the sense
  of stability of macroscopic matter.
  Finally, in Section~\ref{sec:instability} we discuss situations
  where stability fails, in particular the case of bosonic matter.

\section{Classical ``point'' charges}\label{sec:classical}
We consider $N$ classical particles with charges
$Q_1,\ldots,Q_N\in\R$ situated at points $r_1,\ldots,r_N\in\R^3$.
Dimension $3$ is of course the physical space dimension, but a
natural questions would be whether the discussion of charged system
could be generalized to other dimensions. It is however not entirely
clear what the correct physical questions would be and we will there
restrict the discussion to dimension 3. 

We would ultimately like to consider the particles as point
charges. Unfortunately this will 
however lead to divergencies. In order to avoid this we will initially
assume that each particle is 
given by a charge distribution $Q_i\chi_R(r-r_i)$ where 
$\chi_R(r)=R^{-3}\chi(r/R)$ and $\chi\in C_0(\R^3)$ (a continuous
compactly supported function) with
$\int_{\R^3}\chi=1$.

We will start the discussion of this system of charged particles from
the Lagrangian of the particles and the electromagnetic field.  It
is\footnote{Strictly speaking, even if we use a relativistic kinetic
  energy, this Lagrangian is not relativistically invariant. The reason
  is that we consider the particles as rigid bodies, which do not
  Lorenz contract as they move. We will here ignore this additional
  complication. The Lagrangian in the form given here is that of the
  Abraham model of charged particles \cite{spohn}.}
(using Gaussian units)
\begin{eqnarray*}
  L_R(r_j,\bA,V)&=&\sum_{j=1}^N\Big(T^*_i(\dot{r}_j)
  -Q_j\dot{r}_j\bA*\chi_R(r_j)
  -Q_jV*\chi_R(r_j)\Big)\\
  &&{}+\frac1{8\pi}\int (|\partial_t\bA+\nabla V|^2-|\nabla\times\bA|^2).
\end{eqnarray*}
where 
\begin{itemize}
 \item $\dot{r}_j$ denotes the velocity of particle $j$. 
\item $T_j^*(v)$ is the Legendre transform of the 
  kinetic energy $T_j(p)$ as a function of momentum $p$.
  We 
  assume that the kinetic energy functions $T_j$ 
  are convex functions on $\R^3$.
  For a non-relativistic particle of mass $m_j$ 
  we have $T_j(p)=\frac1{2m_j} p^2$ and thus 
  $T_j^*(v)=\frac12m_j v^2$ and 
  for a relativistic particle it is $T_j(p)=\sqrt{p^2+m_j^2}-m_j$ 
  and hence $T_j^*(v)=m_j-m_j\sqrt{1-v^2}$. We are 
  using units in which the speed of
  light is 1 (a convention we will use throughout these notes).
\item $\bA$ is the vector potential and the magnetic field is
  $\bB=\nabla\times\bA$
\item $V$ is the electric potential. The electric field is 
  $$
  \bE=-\partial_t\bA-\nabla V.
  $$
  \end{itemize}
In order to go to a Hamiltonian description we will choose Coulomb
gauge or more precisely require that
$$
\nabla\cdot\partial_t\bA=0,
$$
i.e., we assume that $\nabla\cdot\bA$ is independent of time. 
We will also for simplicity assume that $\bA$ decays sufficiently fast
that we are allowed to ignore boundary terms when integrating by parts.

With the Coulomb gauge condition we see that 
$$
\bE_\perp=-\partial_t\bA
$$
is the divergence free part of the electric field. 
The total electric field is $
\bE=\bE_\perp-\nabla V
$ and we have 
$$
\int |\bE|^2=\int |\bE_\perp|^2+\int |\nabla V|^2.
$$

The electric potential $V$ is not a dynamic variable in the sense that
that $\partial_t V=\dot{V}$ does not occur in $L_R$. The equation for $V$ is 
$$
0=\frac{\delta L_R}{\delta V}=
-\frac1{4\pi}\Delta V-\sum_{j=1}^NQ_j\chi_R(r-r_j),
$$
where we have used the Coulomb gauge condition. 
We recognize the equation for $V$ as Gauss' law
$$
4\pi\sum_{j=1}^NQ_j\chi_R(r-r_j)=-\Delta V =-
\nabla\cdot(\partial_t\bA+\nabla V)=\nabla\cdot \bE.
$$
In the Hamiltonian formalism this is a constraint equation. 
The solution is 
$$
V(r)=\sum_{j=1}^NQ_j\int\chi_R(r)*|r-r_j|^{-1}dr.
$$
The canonical variable dual to $r_j$ is 
$$
p_j=\nabla_{\dot{r}_j}L_R=\nabla T_j^*(\dot{r}_j)-Q_j\bA*\chi_R(r_j).
$$
The canonical variable dual to $\bA$ is 
$$
\frac{\delta L_R}{\delta \partial_t\bA}=-\frac1{4\pi}\partial_t\bA
=-(4\pi)^{-1}\bE_\perp,
$$
due to the Coulomb gauge condition. 
We then find the Hamiltonian function (where we
assume that $T_j$ is convex such that the double Legendre transform of
$T_j^{**}=T_j$)
\begin{eqnarray*}
  H_R(r_j,p_j,\bA,\bE_\perp)&=&\sum_{j=1}^Np_j\dot{r}_j
  -\frac1{4\pi}\int\partial_t\bA\cdot\bE_\perp
  -L_R(r_j,\bA,V)\\
  &=&\sum_{j=1}^N
  T_j\Big(p_j+Q_j\bA*\chi_R(r_j)\Big)+\sum_{j=1}^NQ_jV*\chi_R(r_j)\\
  &&
  +\frac1{8\pi}\int (|\bE_\perp|^2+|\nabla\times\bA|^2)
  -\frac1{8\pi}\int |\nabla V|^2\\
  &=&\sum_{j=1}^N
  T_j\Big(p_j+Q_j\bA*\chi_R(r_j)\Big)+\frac12\sum_{j=1}^NQ_jV*\chi_R(r_j)\\
  &&
  +\frac1{8\pi}\int (|\bE_\perp|^2+|\nabla\times\bA|^2)\\
  &=&\sum_{j=1}^N
  T_j\Big(p_j+Q_j\bA*\chi_R(r_j)\Big)\\
  &&+\frac12\sum_{j=1}^N\sum_{i=1}^NQ_jQ_i
  \iint\frac{\chi_R(r-r_i)\chi_R(r'-r_j)}{|r-r'|}drdr'\\&&
  +\frac1{8\pi}\int (|\bE_\perp|^2+|\bB|^2).
\end{eqnarray*}
If we subtract the (divergent) self-energy 
$$\sum_{j=1}^N\frac{Q_j^2}{2}
\iint\frac{\chi_R(r)\chi_R(r')}{|r-r'|}drdr'=
R^{-1}\sum_{j=1}^N\frac{Q_j^2}{2}\iint\frac{\chi(r)\chi(r')}{|r-r'|}drdr'
$$ 
and assume that $\bA$ is continuous at $r_j$, we see that  as $R$ tends to
zero $H_R$ converges
pointwise to the Hamiltonian 
\begin{eqnarray}
  H(r_j,p_j,\bA,\bE_\perp)&=&\sum_{j=1}^NT_j\big(p_j+Q_j\bA(r_j)\big)+
  \sum_{1\leq i<j\leq N}\frac{Q_iQ_j}{|r_i-r_j|}\nonumber\\&&
  +\frac1{8\pi}\int (|\bE_\perp|^2+|\bB|^2).\label{eq:Hamcl}
\end{eqnarray}
Here $r_j$ and $p_j$ are dual canonical variables as are 
the field variables $\bA$ and $-(4\pi)^{-1}\bE_\perp$.
Unfortunately, this Hamiltonian is only formal and suffers from
singularities (the field $\bA$ that solves Hamilton's equations will
be singular at $r_j$) leading to severe difficulties in describing the
motion of classical {\it point} charges. 

If external fields $V_{\rm ex}$ and $\bA_{\rm ex}$ are present,
the energy is 
\begin{eqnarray*}
\sum_{j=1}^N \Big(T_j\big(p_j+Q_j(\bA+\bA_{\rm ex})(r_j)\big)+Q_jV_{\rm
  ex}(r_j)\Big)+
  \sum_{1\leq i<j\leq N}\frac{Q_iQ_j}{|r_i-r_j|}\hspace{2cm}\\
  \hfil+\frac1{8\pi}\int (|\bE_\perp|^2+|\bB|^2).
\end{eqnarray*}
\section{Charged quantum gases}\label{sec:quantization}
We now discuss the quantization of the Hamiltonian of charged point
particles. We emphasize that it has not been possible to define a
fully relativistically invariant and causal theory of quantum
electrodynamics (QED) and all the models we describe here are at best
approximations to such a theory (if it exists). All models discussed
here are mathematically well defined (except when otherwise stated explicitly).

There are several levels of quantization that may be
considered. 
\begin{itemize}
\item We can leave the fields $\bA$ and
  $\bE_\perp$ classical and quantize the particles, i.e., describe them
  by a square integrable wave function $\psi(r_1,\ldots,r_N)$. 
\item We may quantize the particles and the fields, i.e., describe the
  particles in terms of a wave function $\psi$ and turn the fields
  into operator valued functions $\bA$ and $\bE_\perp$. This would
  require introducing some cut-off regularization in the fields. 
\item We may second quantize the particles, i.e., let also $\psi$ be
  an operator valued function. This procedure is necessary if we
  consider relativistic particles described by the Dirac operator. 
\end{itemize}

\subsection{Quantized particles and classical fields}
The variables of the system are the $3$-dimensional vector
fields $\bA,\bE_\perp$ (assumed to satisfy appropriate regularity and
decay properties, at least implying that $\bE_\perp$ and
$\bB=\nabla\times\bA$ are square integrable) and the wave function
$\psi$, which is a normalized function in $\bigotimes_{j=1}^N
[L^2(\Omega)]^{\nu_j}$, where $\Omega\subset\R^3$ (say an open set)
and $\nu_j$ is a positive integer counting the
number of internal degrees of freedom of particle $j$, (e.g.\ a particle of
spin $s$ would correspond to $\nu_j=2s+1$). We shall write 
$$
\psi=\psi(r_1,s_1,\ldots, r_N,s_N),\quad r_j\in\Omega,\ s_j=1,\ldots,\nu_j.
$$

The energy of the system is given by 
\begin{equation}\label{eq:NQenergy}
  \cE_N(\psi,\bA,\bE)=\langle\psi,H_N(\bA,\bE_\perp)\psi\rangle,
\end{equation}
where $\langle\psi,\phi\rangle$ refers to the inner product of 
$\psi,\phi\in \bigotimes_{j=1}^N [L^2(\Omega)]^{\nu_j}$ 
and $H_N$ is the (unbounded) operator
(depending on $\bA$ and $\bE_\perp$)
\begin{eqnarray}
  H_N(\bA,\bE_\perp)&=&\sum_{j=1}^N\Big(
  T_j\big(-i\nabla_j+Q_j(\bA+\bA_{\rm ex})(r_j)\big)+
  Q_jV_{\rm ex}(r_j)\Big)+\sum_{1\leq i<j\leq N}
  \frac{Q_iQ_j}{|r_i-r_j|}\nonumber\\&&{}+\frac1{8\pi}\int
  (|\bE_\perp|^2+|\bB|^2).\label{eq:Ham}
\end{eqnarray}
We will throughout be using units in which the reduced Planck constant
$\hbar=1$. The last integral above acts as a ($\bA$ and $\bE_\perp$
dependent) scalar in the Hilbert space. The Hamiltonian
$H(\bA,\bE_\perp)$ depends
also on the exterior fields $V_{\rm ex}$ and $\bA_{\rm ex}$, but we
suppress this in the notation as
these fields usually remain fixed. In fact, we will mostly, and unless
otherwise explicitly stated, assume that the
exterior fields vanish, i.e., $V_{\rm ex}=0$ and $\bA_{\rm ex}=0$.
 
The expectation value $\cE_N(\psi,\bA,\bE_\perp)$ is not defined for all
$\psi$ in the Hilbert space 
$$
\bigotimes_{j=1}^N [L^2(\Omega)]^{\nu_j}.
$$
We will here avoid the discussion of domains of self-adjointness of
the operator $H(\bA,\bE_\perp)$. We will instead restrict attention to
$\psi$ in the the subspace of smooth functions with compact support,
i.e, $[C_0^\infty(\Omega^N)]^{\nu_1\cdots\nu_N}
\subseteq\bigotimes_{j=1}^N [L^2(\Omega)]^{\nu_j}$. 

One of the main issues we will discuss
in these notes is the question of stability, i.e., whether
$\cE_N(\psi,\bA,\bE_\perp)$ is bounded from below independently of $\psi$
(normalized), $\bA$ , and $\bE_\perp$. If such a lower bound holds the
operator $H(\bA,\bE_\perp)$ has a self-adjoint Friedrichs extension
and we are actually making claims about this extension. {F}rom our
point of view the only complication due to considering the restriction
to $C_0^\infty$ is that a possible ground state (a state achieving the
lowest possible energy) is most likely not represented by an element
in  $C_0^\infty$, but only by an element
in the Friedrichs extended domain. We shall, however, not be concerned
with the actual ground states, but only the energy, so we ignore this
issue. 

Since the three coordinates of $-i\nabla_j +Q_j\bA(r_j)$ correspond to
in general non-commuting operators, we must discuss the meaning of
$T_j\big(-i\nabla_j+Q_j\bA(r_j)\big)$. We will, in fact, only consider
examples where the functions $T_j(p)$ can be written in terms of
(possibly matrix-valued) polynomial expressions of $p$ in such a way
that the meaning of $T_j$ (at least on a suitable domain) will be
clear. The examples we will consider are
\begin{itemize}
\item Non-relativistic kinetic energy operators, where 
$T_j(p)=(2m_j)^{-1} p^2$, i.e., the operator is 
\begin{equation}\label{eq:schkin}
T_j\big(-i\nabla_j+Q_j\bA(r_j)\big)=(2m_j)^{-1} (-i\nabla_j+Q_j\bA(r_j))^2.
\end{equation}
We will refer to particles with this kinetic energy as non-relativistic
particles. This is the kinetic energy used when treating 
non-relativistic atoms and molecules or ordinary matter.

\item Relativistic kinetic energy operators, where
$T_j(p)=(p^2+m_j^2)^{1/2}-m_j$, i.e., the operator is 
\begin{equation}\label{eq:pseudokin}
T_j\big(-i\nabla_j+Q_j\bA(r_j)\big)=( (-i\nabla_j+Q_j\bA(r_j))^2+m_j^2)^{1/2}-m_j.
\end{equation}
The square root of an operator is here defined in the spectral
theoretic sense\footnote{The operator inside the square root is
  defined as a self-adjoint operator by Friedrichs extending it from the
  domain of smooth functions with compact support.}.
We will refer to particles with this kinetic energy as relativistic
(or sometimes pseudo-relativistic)
particles.
Both relativistic and non-relativistic particles may have internal
degrees of freedom corresponding to $\nu_j$, being greater than one. 
\item The non-relativistic and relativistic Pauli-operators. These are
  operators acting on two-component vector valued functions given by
  inserting the operator 
  $$\bsigma_j\cdot(-i\nabla_j+Q_j\bA(r_j))$$ into
  the kinetic energy functions above. Here
  $\bsigma=(\sigma^1,\sigma^2,\sigma^3)$ is the vector of $2\times 2$
  Pauli matrices 
  $$
  \sigma^1=\left(\begin{array}{cc}0&1\\1&0
    \end{array}
  \right),\quad
  \sigma^2=\left(\begin{array}{cc}0&-i\\i&0
    \end{array}
  \right),\quad
  \sigma^3=\left(\begin{array}{cc}1&0\\0&-1
    \end{array}
  \right).
  $$
 (The subscript $j$ on $\bsigma$ above
  indicates that it acts on the internal degrees of freedom of
  particle $j$.)
Thus in this case $\nu_j=2$. The resulting kinetic energy operators are
  \begin{equation}\label{eq:paulikin}
    T_j\big(-i\nabla_j+Q_j\bA(r_j)\big)=
    (2m_j)^{-1}(\bsigma_j\cdot(-i\nabla_j+Q_j\bA(r_j)))^2
  \end{equation}
  for non-relativistic Pauli particles and 
  \begin{equation}\label{eq:paulipseudokin}
    T_j\big(-i\nabla_j+Q_j\bA(r_j)\big)=
    \left(\big(\bsigma_j\cdot\left(-i\nabla_j+Q_j\bA(r_j)\right)\big)^2+m_j^2\right)^{1/2}-m_j
  \end{equation}
  for relativistic Pauli particles.
  For the Pauli operator we have the Lichnerowicz' formula
  \begin{equation}\label{eq:lichn}
    \big(\bsigma_j\cdot(-i\nabla_j+Q_j\bA(r_j)\big)^2
    =(-i\nabla_j+Q_j\bA(r_j))^2+Q_j\bsigma_j\cdot\bB(r_j)
  \end{equation}
  and we see that the Pauli operator includes the coupling of the
  particle spin to the magnetic field. 
  
\item We could also consider the $4\times 4$ Dirac operator
  $$
  T_j(p)=\balpha\cdot p+m_j\beta,
  $$ .i.e., 
  $$
  T_j(-i\nabla_j+Q_j\bA(r_j))=\balpha_j\cdot (-i\nabla_j+Q_j\bA(r_j))+m_j\beta_j
  $$
  where $\balpha$ and $\beta$ are standard $4\times 4$ Dirac matrices,
  e.g.,
  $$
  \balpha=\left(\begin{array}{cc}0&\bsigma\\\bsigma&0
    \end{array}
  \right),\quad
  \beta=\left(\begin{array}{cc}I&0\\0&-I
    \end{array}
  \right)
  $$
  using a $2\times2$-block notation. Thus in this case $\nu_j=4$. 
  In contrast to the other types of operators the Dirac operator,
  however, is not positive, in fact, not bounded below and we will
  therefore not be able to treat it unless we second quantize the
  particle fields. We will discuss this briefly below. A different
  approach to deal with the unboundedness from below of the Dirac
  operator is to restrict to the subspace of $L^2(\R^3)^4$ which
  corresponds to the positive spectral subspace of the Dirac
  operator. This approach is called the no-pair theory, but we will
  not discuss it further here. 
\end{itemize} 

\subsection{Statistics of identical particles}

Until now all particles have been considered as
distinguishable, but if we have identical particles the issue of
particle statistics plays an important role. 

The $N$-particle space for $N$-identical particles moving in
$\Omega\subset\R^3$ and with $\nu$ internal degrees of freedom is
$\cH_N=\bigotimes^N [L^2(\Omega)]^\nu$. On $\cH_N$ we define the
orthogonal projections $P_N^\pm$
$$
(P_N^\pm\psi)(r_1,s_1,\ldots,r_N,s_N)=\frac1{N!}\sum_{\sigma\in S_N}
(\pm1)^\sigma\psi(r_{\sigma^{-1}(1)},s_{\sigma^{-1}(1)},\ldots,r_{\sigma^{-1}(N)},s_{\sigma^{-1}(N)}).
$$
They project onto the symmetric $(+)$ or antisymmetric $ (-)$ subspaces. We
denote these subspaces $\cH_N^\pm=P_N^\pm\cH_N$. We will also use the
notation $\cH_N^-=\bigwedge^N [L^2(\Omega)]^\nu$.

In case of $N$ identical particles, i.e., if the operators $T_j$ and
the charges $Q_j=Q$ are  
the same for all $j=1,\ldots,N$,  the Hamiltonian $H_N$ in
(\ref{eq:Ham}) maps the subspaces $\cH_N^\pm$ to themselves and it
makes sense to restrict to these subspaces. We will write 
\begin{equation}\label{eq:fermiboseenergy}
\cE_N^\pm(\psi,\bA,\bE_\perp)
\end{equation}
to emphasize that we restrict to
$\psi\in\cH^\pm_N$. In the symmetric case $(+)$ we say that we have a
system of $N$ {\it bosons} in the antisymmetric case $(-)$ we say that we 
have a system of $N$ fermions. 

It is of course also possible to have mixtures of several species of identical
particles being fermions or bosons 
or even several species of identical particles together with
a number of distinguishable particles. We leave it to the reader to 
work out the structure of the underlying Hilbert space and the
Hamiltonian in the general case. We will look at specific examples
later. 

\subsection{Grand canonical picture}
It is often useful to consider a situation where the particle 
number is not specified at the outset, but where we would instead ask 
what the optimal particle number is in a given situation, e.g., what
number of particles minimizes the energy. This picture is referred to
as the {\it grand canonical picture}. The optimal particle number may
if necessary
be adjusted by adding a term $\mu$ times the particle number to the
Hamiltonian. Such a parameter $\mu$ is called a {\it chemical
  potential}.
We note that this is not the same as adding a constant to the exterior
electric potential $V_{\rm ext}$ as such a constant will multiply the
total charge of the system. 

In order to treat variable particle number we define the bosonic or
fermionic {\it Fock spaces}
$$
\cF^\pm=\cF^\pm((L^2(\Omega))^\nu)=\bigoplus_{N=0}^\infty\cH_N^\pm,
$$
with the convention that $\cH_0^\pm=\C$. The element $1\in\C=\cH_0^\pm$ is
referred to as the vacuum vector and will be denoted by $|{\bf
  0}\rangle$.
For a normalized vector $\psi\in\cF^\pm$ we may write 
$\psi=\bigoplus_{N=0}^\infty\psi_N$, where $\psi_N\in\cH_N^\pm$ with 
$\sum_{N=0}^\infty\|\psi_N\|^2=1$. We say that such a vector
represents a  {\it grand canonical state} and we define the {\it grand canonical energy}
(with chemical potential $\mu$ included)
\begin{equation}\label{eq:GCenergy1sp}
  \cE^\pm(\mu,\psi,\bA,\bE_\perp)=\sum_{N=0}^\infty\cE_N^\pm(\psi_N,\bA,\bE_\perp)
  +\mu N\|\psi_N\|^2.
\end{equation}
As before this energy is not defined for all $\psi$, but we restrict to
$\psi$ corresponding to finitely many particles, i.e., 
$\psi=\bigoplus_{N=0}^M\psi_N$ for some finite
integer $M$ and where each $\psi_N$ is in $C_0^\infty$. 

Again it is possible to consider several species of identical
particles in the grand canonical picture. We leave it to the reader to
write down the Hilbert space and the energy (see also below).

\subsection{Second quantization and quantization of fields}
\label{sec:2ndquant}
We shall here give a brief introduction to second quantization and
discuss how to quantize particle fields and the electromagnetic
fields.

For $f\in L^2(\Omega)^\nu$ we define the annihilation operator 
$\widetilde{a}(f):\cH_N\to\cH_{N-1}$ for $N=1,\ldots$ by
$$
(\widetilde{a}(f)\psi)(r_1,s_1,\ldots,r_{N-1},s_{N-1})
=\sqrt{N}\sum_{s_N=1}^\nu\int
\overline{f(r_N,s_N)}\psi(r_1,s_1,\ldots,r_{N},s_{N})dr_N.
$$
The adjoint of this operator is $\widetilde{a}^*(f):\cH_{N-1}\to\cH_{N}$
given by
$$
(\widetilde{a}^*(f)\psi)(r_1,s_n,\ldots,r_N,s_N)=
\sqrt{N}f(r_N,s_N)\psi(r_1,s_1,\ldots,r_{N-1},s_{N-1}).
$$
We define the bosonic and fermionic annihilation operators 
${a}_\pm(f):\cH_N^\pm\to\cH_{N-1}^\pm$ as the restriction of
$\widetilde{a}(f)$ to the respective subspaces, i.e., 
${a}_\pm(f)=\widetilde{a}(f)_{|\cH_N^\pm}$. The adjoints are
${a}_\pm^*(f):\cH_{N-1}^\pm\to\cH_{N}^\pm$ given by 
${a}_\pm^*(f)=P_N^\pm\widetilde{a}^*(f)_{|\cH_{N-1}^\pm}$.

We may extend ${a}_\pm(f)$ and ${a}_\pm^*(f)$ to operators on the
subspace of the Fock spaces $\cF^\pm$ corresponding to finite particle
numbers, i.e., span$\cup_{M=0}^\infty \bigoplus_{N=0}^M\cH_N^\pm$.
They cannot be extended as bounded operators on the full Fock spaces. 

The extended operators satisfy the famous commutation $(+)$ and
anti-commu\-ta\-tion $(-)$ relations
$$
[a_\pm(f),a^*_\pm(g)]_\pm=\langle f,g \rangle_{L^2(\Omega)^\nu}I
$$
where $[A,B]_\pm=AB\mp BA$ and $I$ is the identity of Fock space (or
rather its restriction to the subspace corresponding to finite particle
numbers.

If $\{f_j\}$ is an orthonormal basis in $L^2(\Omega)^\nu$ we define
the operator valued distributions
$$
\phi_\pm(r,s)=\sum_{j=1}^\infty f_j(r,s)a_\pm(f_j),\quad
\phi^*_\pm(r,s)=\sum_{j=1}^\infty \overline{f_j(r,s)}a^*_\pm(f_j),\
$$
allowing us to write
$$
a_\pm(f)=\sum_{s=1}^\nu\int \overline{f(r,s)}\phi_\pm(r,s) dr,\quad
a^*_\pm(f)=\sum_{s=1}^\nu\int f(r,s)\phi^*_\pm(r,s) dr.
$$
for $f\in L^2(\Omega)^\nu$. Formally we have 
$$
[\phi_\pm(r,s),\phi^*_\pm(r',s')]_\pm=\delta_{ss'}\delta(r-r').
$$
We refer to $\phi$ and $\phi^*$ as field operators. 

Using these field operators we may write the grand canonical Hamiltonian
$\bigoplus_{N=0}^\infty (H_N+\mu N)$, corresponding to identical fermions or
bosons, formally as
\begin{eqnarray*}
\bigoplus_{N=0}^\infty (H_N(\bA,\bE_\perp)+\mu N)&=&
\sum_{s,s'=1}^\nu\int\phi^*_\pm(r,s)\left[T^{ss'}(-i\nabla_r+Q\bA(r))
  +\mu\delta_{ss'}\right]\phi_\pm(r,s')dr
\\&&+\frac{1}2\sum_{ss'}\iint \phi^*_\pm(r,s)\phi^*_\pm(r',s')
\frac{Q^2}{|r-r'|}\phi_\pm(r',s')\phi_\pm(r,s)drdr'\\&&
+\frac1{8\pi}\int (|\bE_\perp|^2+|\bB|^2),
\end{eqnarray*}
where $T^{ss'}$, $s,s'=1,\ldots,\nu$ 
refer to the matrix components of the kinetic energy
operator. It is left as an exercise to the reader to check that the
ordering of $\phi^*$ and $\phi$ exactly gives the correct Hamiltonian
with no-self interactions.

In this formalism it is easy also to write down the grand canonical
operator corresponding to $K$ different species of either fermions or
bosons. For $j=1,\ldots,K$ let $T_j$, $Q_j$, and $\nu_j$ represent the
kinetic energy function, the charge, and the internal degrees of
freedom of species $j$ which is either a fermion or a boson.  The
relevant Hilbert space is
$\cH=\bigotimes_{j=1}^K\cF_j([L^2(\Omega)]^{\nu_j})$ 
where $\cF_j$ is
the Fock space for species $j$. Denoting the field operators for
species $j$ by $\phi_j$ and $\phi^*_j$ the corresponding grand canonical
Hamiltonian (with chemical potential $\mu$ included) is 
\begin{eqnarray*}
H(\mu,\bA,\bE_\perp)&=&\sum_{j=1}^K\sum_{s,s'=1}^{\nu_j}
\int\phi^*_j(r,s) \left[T^{ss'}_j(-i\nabla_r+Q_j\bA(r))
+\mu\delta_{ss'}\right]\phi_j(r,s')dr
\\&&+\frac12\sum_{i,j=1}^K
\sum_{s=1}^{\nu_i}\sum_{s'=1}^{\nu_j}\iint \phi^*_i(r,s)\phi^*_j(r',s')
\frac{Q_iQ_j}{|r-r'|}\phi_j(r',s')\phi_i(r,s)drdr'\\&&
+\frac1{8\pi}\int (|\bE_\perp|^2+|\bB|^2).
\end{eqnarray*}
The subspace on which $H(\mu,\bA,\bE_\perp)$ is defined is the
space corresponding to finitely many particles and where the
restriction to each particle sector is a smooth function with compact
support. Although it is hopefully clear what this means it is rather
complicated to write it down explicitly. For the convenience of the
reader we will nevertheless do this now.  The subspace of $\cH$
corresponding to $N_j$ particles of species $j$, , $j=1,\ldots,K$ is
$\bigotimes_{j=1}^K
P_j\left(\bigotimes^{N_j}[L^2(\Omega)]^{\nu_j}\right)$, where $P_j$
refers to the relevant projection corresponding to the statistics of
species $j$. We may consider this space a subspace of
$[L^2(\Omega^{N_1+\ldots+N_K})]^{\nu_1^{N_1}\cdots\nu_K^{N_k}}$. The
subspace of smooth functions with compact support is
$[C_0^\infty(\Omega^{N_1+\ldots+N_K})]^{\nu_1^{N_1}\cdots\nu_K^{N_k}}$.
Thus the subspace on which we define $H(\mu,\bA,\bE_\perp)$ is
\begin{equation}
\cD=\hbox{span}\!\!\!\bigcup_{M_1,\ldots,M_j=0}^\infty 
\bigoplus_{N_1=0}^{M_1}\cdots\bigoplus_{N_K=0}^{M_K}\left(\bigotimes_{j=1}^K 
P_j\bigotimes^{N_j}L^2(\Omega)^{\nu_j}\right)\bigcap
[C_0^\infty(\Omega^{N_1+\ldots+N_K})]^{\nu_1^{N_1}\cdots\nu_K^{N_k}}.
\end{equation}
The energy of the system with particles being in a state represented
by 
$\psi\in\cD$ 
is denoted
\begin{equation}\label{eq:NQGCenergy}
\cE(\mu,\psi,\bA,\bE_\perp)=\langle\psi,H(\mu,\bA,\bE_\perp)\psi\rangle.
\end{equation}
It is easy to check that this agrees with the definition
(\ref{eq:GCenergy1sp}) in the case of only one species. 

\subsection{Quantization of the electromagnetic field}
We will briefly discuss how to quantize the electromagnetic field.
We will remain in Coulomb gauge and quantize such that $\nabla\cdot\bA=0$. 
This is most conveniently done in momentum space. 

For $k\in\R^3$ choose $e_1(k),e_2(k)\in R^3$ such that 
$e_1(k),e_2(k),k$ form an orthonormal basis. $e_1,e_2$ cannot be
chosen continuously, but this will not cause problems for what we want to say. 

Let $\phi(r,\lambda)$, $\lambda=1,2$ be a bosonic field operator with
two internal degrees of freedom. They are field operators for the
light quanta, i.e., photons. Define the Fourier transformed
operators (Of course they are also simply bosonic field operators)
$$
\widehat\phi(k,\lambda)=(2\pi)^{-3/2}\int e^{-ikr}\phi(r,\lambda)dr,\quad
\widehat\phi^*(k,\lambda)=(2\pi)^{-3/2}\int e^{ikr}\phi^*(r,\lambda)dr.
$$
We define the quantized magnetic vector potential as the operator
valued distribution
\begin{equation}\label{eq:QbA}
\bA(r)=(2\pi)^{-3/2}\sum_{\lambda=1,2}\int_{\R^3}
\sqrt{\frac{2\pi}{|k|}}e_\lambda(k)(e^{ikr}\widehat\phi(k,\lambda)+
e^{-ikr}\widehat\phi(k,\lambda))dk
\end{equation}
and the transversal electric field as
\begin{equation}\label{eq:QbE}
\bE_\perp(r)=i(2\pi)^{-1/2}\sum_{\lambda=1,2}\int_{\R^3}
\sqrt{\frac{|k|}{2\pi}}e_\lambda(k)(e^{ikr}\widehat\phi(k,\lambda)-
e^{-ikr}\widehat\phi(k,\lambda))dk.
\end{equation}
We then find the commutator between the conjugate variables
$$
\left[\bA_i(r),-\frac{1}{4\pi}\bE_{\perp,j}(r'r)\right]_+={\bf P}_{i,j}(r,r'),
$$
where ${\bf P}(r,r')$ is the $3\times 3$-matrix valued integral kernel
of the projection in $L^2(\R^3)^3$ projecting onto  divergence free
vector fields. 

A straightforward (formal) calculation gives for the field energy
\begin{eqnarray*}
  \frac1{8\pi}\int_{\R^3}|\bE_\perp(r)|^2+|\nabla\times\bA(r)|^2dr
  &=&\frac12\sum_{\lambda}\int_{\R^3}|k|(\widehat\phi^*(k,\lambda)\widehat\phi(k,\lambda)\\&&
  +\widehat\phi(k,\lambda)\widehat\phi^*(k,\lambda))dk.
\end{eqnarray*}
This expression however is infinite and we must normal order it to get a
well-defined operator:
$$
\sum_{\lambda}\int_{\R^3}|k|\widehat\phi^*(k,\lambda)\widehat\phi(k,\lambda)dk.
$$
This is the field energy operator of the electromagnetic field on the
Fock space $\cF^+(L^2(\R^3)^2)$. 

\subsection{Non-relativistic QED}
We may now write down the Hamiltonian of non-relativistic QED, i.e.,
of the quantized electromagnetic field coupled to quantized
non-relativistic particles. The particles will be described by the 
non-relativistic kinetic energies (\ref{eq:schkin}) or
(\ref{eq:paulikin}), but since $A$ is now an operator valued
distribution, these operators will not make sense unless we again
introduce the extended charge distribution of the particles. 
The grand canonical non-relativistic QED Hamiltonian for $K$ species
of identical particles is then (ignoring for simplicity  the chemical potential) 
\begin{eqnarray*}
H&=&\sum_{j=1}^K\sum_{s,s'=1}^{\nu_j}
\int\phi^*_j(r,s) T^{ss'}_j(-i\nabla_r+Q_j\bA*\chi_R(r))\phi_j(r,s')dr
\\&&+\sum_{i,j=1}^K\frac{Q_iQ_j}2\sum_{s=1}^{\nu_i}\sum_{s'=1}^{\nu_j}\iint \phi^*_i(r,s)\phi^*_j(r',s')
\frac1{|r-r'|}\phi_j(r',s')\phi_i(r,s)drdr'\\&&
+\sum_{\lambda}\int_{\R^3}
|k|\widehat\phi^*(k,\lambda)\widehat\phi(k,\lambda)dk.
\end{eqnarray*}
This operator is defined on the Hilbert space
$$\Big(\bigotimes_{j=1}^K\cF_j([L^2(\Omega)]^{\nu_j})\Big)
\bigotimes\cF^+([L^2(\R^3)]^2).
$$
The operators $\phi_j$ are field operators for the particles and
$\phi$ is the field operator for the photons. The energy may be
calculated in a state represented by a $\Psi$ in the subspace of the Hilbert space
consisting of $C_0^\infty$ functions of finitely many particles and
photons (we will not write this explicitly this time). The energy is
denoted
\begin{equation}\label{eq:NRQEDenergy}
\cE_{\rm NRQED}(\Psi)=\langle\Psi,H\Psi\rangle.
\end{equation}
As written now the model depends on the regularization parameter $R$. 
The limit as $R$ tends to 0 is not well understood and will require at
least to renormalize the bare mass and charges of the particles. 

\subsection{Relativistic QED Hamiltonian}
As already emphasized a Hamiltonian (or for that matter
any non-perturbative) formulation of QED is non-existent.
Here we simply write down the formal expression for the 
Hamiltonian for the electron-positron field (with charge $e$)
interacting with the 
electromagnetic field: 
\begin{eqnarray*}
&&H_{\rm QED}=\sum_{a,b=1}^4\int
\phi_{\rm e}^*(r,a)(\balpha\cdot(-i\nabla+e \bA(r))+m\beta)_{a,b}\phi_{\rm e}(r,b)dr
\\&&+\frac{e^2}{8}\sum_{a,b=1}^4
\iint\frac{\left[\phi_{\rm e}(a,r),\phi_{\rm e}^*(a,r)\right]_+
\left[\phi_{\rm e}(b,r'),\phi_{\rm e}^*(b,r')\right]_+}{|r-r'|}drdr'\\&&
+\sum_{\lambda}\int_{\R^3}
|k|\widehat\phi^*(k,\lambda)\widehat\phi(k,\lambda)dk.
\end{eqnarray*}
Here $\phi_e$ refers to the fermionic field operator for the electron-positron
field and $\phi$ is the bosonic field operator for the photon field.
The operator $\bA$ is given by (\ref{eq:QbA}).  
Note that we have not distinguished between electrons and positrons,
but that the operator is written in a charge conjugation invariant
way as the density is written as the commutator 
$\frac12\sum_{a=1}^4\left[\phi_{\rm e}(a,r),\phi_{\rm
    e}^*(a,r)\right]_+$.

The operator $H_{\rm QED}$ is ill-defined unless regularizations are
introduced and even in this case it is very difficult to analyze. 
The no-photon situation was studied in the mean-field approximation in
\cite{HLS}.

\section{Stability}\label{sec:stability}
In the previous section we discussed how to define the energy of
states of charged quantum gases in different models. 
 
We have introduced the fixed particle number (or canonical) energy
$\cE_N(\psi,\bA,\bE_\perp)$ in
(\ref{eq:NQenergy}) (or the bosonic or fermionic analogs in 
(\ref{eq:fermiboseenergy})) or the grand canonical energy
$\cE(\mu,\psi,\bA,\bE_\perp)$ in (\ref{eq:NQGCenergy}). We also defined 
the non-relativistic QED energy $\cE_{\rm NRQED}(\Psi)$ in 
(\ref{eq:NRQEDenergy}). 

We will say that a system is {\it stable of the first kind} or {\it
  canonically stable} if the energy $\cE_N(\psi,\bA,\bE_\perp)$ is
bounded below independently of 
$\bA$, $\bE_\perp$, and normalized $\psi$. In this case 
we will call the infimum of $\cE_N(\psi,\bA,\bE)$
the {\it ground state} energy
regardless of whether an actual minimizer (a ground state) exists or
not. Thus the canonical ground state energy of the system is
\begin{eqnarray*}
E_N(\Omega)=\inf\Big\{\cE_N(\psi,\bA,\bE) &\Big|&
\psi\in\Big(\bigotimes_{j=1}^N L^2(\Omega)^{\nu_j}\Big)\cap
C_0^\infty(\Omega^N)^{\nu_1\ldots\nu_N},\ \|\psi\|=1,\\&&
\bA,\bE_\perp\in C_0^\infty(\R^3;\R^3)\Big\}.
\end{eqnarray*}
Note that we are restricting the particles to be in the set $\Omega$
whereas $\bA$ and $\bE$ are unrestricted vector fields in $\R^3$. 
It is immediate to see that we might take $\bE_\perp=0$ in the infimum,
this will however not be the case for quantized fields below. 

The ground state energy of course depends on the types of particles in
the system. We are suppressing this dependence in 
order not to overburden the notation.
 
The ground state is the state of the system
at absolute zero temperature. It is of course also of interest to
study quantum gases at positive temperature corresponding to
minimizing the {\it free energy} we shall however not do this here. 

We could also have chosen to consider the purely static Coulomb
potential and set $\bA=0$, but as we shall see the inclusion of $\bA$
does not really change the treatment in the non-relativistic (and
non-Pauli) case from the points of view discussed here.

We say that a system satisfies {\it stability of the second kind} or
{\it stability of matter} if $N^{-1} E_N(\Omega)$ is bounded below
independently of $N$ for all (open or in some cases sufficiently
regular) $\Omega\subset\R^3$. This is the version of stability mainly
studied in \cite{LSbook}. 

We will here use a slightly stronger notion which we refer to as 
{\it grand canonical stability}. We define the 
{\it grand canonical ground state energy} as
$$
E(\mu,\Omega)=\inf\{\cE(\mu,\psi,\bA,\bE_\perp)\ \Big|\ 
\psi\in\cD,\ \|\psi\|=1,\ 
\bA,\bE_\perp\in C_0^\infty(\R^3;\R^3)\Big\},
$$ 
where $\cE(\mu,\psi,\bA,\bE_\perp)$ was defined in (\ref{eq:NQGCenergy}).
It of course depends on the species of particles. 

We say that a system is grand canonically stable (with chemical
potential $\mu$) if
$$
\inf_{\Omega\subseteq\R^3}|\Omega|^{-1}E(\mu,\Omega)>-\infty.
$$ 
The infimum here is over all open sets $\Omega$ with bounded volume
$|\Omega|$ (or possibly sufficiently regular sets if necessary, but we
will not consider such cases here).

The original proof of stability of matter is due to Dyson and
Lenard~\cite{dl1,dl2} and later by a simpler method by Lieb and
Thirring~\cite{lt}. We will present a proof of grand canonical
stability in a simple case relying on a combination of the two approaches. 

For grand canonically stable systems it is of interest to consider whether
the {\it thermodynamic limit} 
\begin{equation}\label{eq:thermolimit}
  \lim_{\Omega\to\R^3}|\Omega|^{-1}E(\mu,\Omega)
\end{equation}
exists. The limit $\Omega\to\R^3$ can be given a precise meaning in
different ways. Here we shall simply take the simple situation of 
the family of scaled copies $L\Omega$ of a fixed set $\Omega$ and let
the real parameter $L$ tend to infinity. 
\subsection{Stability of the first kind 
for non-relativistic particles}

We shall here prove the stability of the first kind for
non-relativistic particles, i.e., particles with the kinetic energy (\ref{eq:schkin}).
\begin{theorem}[Non-relativistic stability of the first kind]\label{thm:stabfirstkind}
\hfill\\For all $\psi\in C_0^\infty(\Omega^N)^{\nu_1\cdots\nu_N}$ and all
vector fields
$\bA,\bE\in C_0^\infty(\R^3;\R^3)$ we have 
\begin{eqnarray*}
\Big\langle\psi,\Big(\sum_{j=1}^N\frac1{2m_j}(-i\nabla_j+Q_j\bA(r_j))^2
  +\sum_{1\leq i<j\leq N}
  \frac{Q_iQ_j}{|r_i-r_j|}&+&\frac1{8\pi}\int
  (|\bE_\perp|^2+|\bB|^2)\Big)\psi\Big\rangle\\&\geq& -C\|\psi\|^2,
\end{eqnarray*}
where the constant $C>0$ depends only on the number of particles $N$
and their properties, i.e., on  $\nu_1,\ldots,\nu_N\in\N$, $m_1,\ldots,m_N>0$ and $Q_1,\ldots,Q_N\in\R$.
\end{theorem}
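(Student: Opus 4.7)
The plan is to exploit three facts in order: the field energy is pointwise non-negative and may be dropped, Hardy's inequality controls the Coulomb singularity in terms of the Laplacian, and the diamagnetic inequality lets us replace $-\Delta_j$ by $(-i\nabla_j+Q_j\bA(r_j))^2$ when convenient. Since $N$ is fixed, no antisymmetrization or Lieb--Thirring estimate is needed; we can afford to distribute the kinetic energy of a single particle to dominate all $N-1$ attractive pairs in which it participates.

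Concretely, I would first note $\tfrac{1}{8\pi}\int(|\bE_\perp|^2+|\bB|^2)\ge 0$ and drop it. Next, starting from the 3-dimensional Hardy inequality $\int|f|^2/|x|^2\,dx\le 4\int|\nabla f|^2\,dx$ and the arithmetic-geometric inequality $|r|^{-1}\le \lambda/(2|r|^2)+1/(2\lambda)$, one obtains, for any $\lambda>0$ and any smooth $\psi$, the form bound
\begin{equation*}
\Big\langle\psi,\frac{1}{|r_i-r_j|}\psi\Big\rangle
\le 2\lambda\Big\langle\psi,(-i\nabla_i)^2\psi\Big\rangle+\frac{1}{2\lambda}\|\psi\|^2,
\end{equation*}
where Hardy is applied in the $r_i$-variable with pole at the fixed $r_j$. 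The diamagnetic inequality $|\nabla|\psi||\le|(-i\nabla+Q\bA)\psi|$ (which holds for $\psi\in C_0^\infty$ and continuous $\bA$) promotes this to
\begin{equation*}
\Big\langle\psi,\frac{1}{|r_i-r_j|}\psi\Big\rangle
\le 2\lambda\Big\langle\psi,(-i\nabla_i+Q_i\bA(r_i))^2\psi\Big\rangle+\frac{1}{2\lambda}\|\psi\|^2.
\end{equation*}

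With this tool in hand, bound $Q_iQ_j/|r_i-r_j|\ge -|Q_iQ_j|/|r_i-r_j|$ for each pair and, for $i<j$, apply the above estimate with a pair-dependent parameter $\lambda_{ij}$. For each $i$ the total coefficient multiplying $\langle\psi,(-i\nabla_i+Q_i\bA(r_i))^2\psi\rangle$ is $\sum_{j>i}2\lambda_{ij}|Q_iQ_j|$. Choosing $\lambda_{ij}=[4(N-1)m_i|Q_iQ_j|]^{-1}$ (and ignoring trivially pairs with $Q_iQ_j=0$) makes this sum at most $1/(2m_i)$, so the full kinetic term $\tfrac{1}{2m_i}(-i\nabla_i+Q_i\bA(r_i))^2$ absorbs all negative pair contributions to particle $i$. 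The residual constant is $C=\sum_{i<j}|Q_iQ_j|/(2\lambda_{ij})=\sum_{i<j}2(N-1)m_i Q_i^2Q_j^2$, depending only on $N$, the masses and the charges as required.

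The only genuinely delicate step is the diamagnetic upgrade, since it must be applied with $\bA$ evaluated at a single coordinate while the other coordinates are frozen; however, for $\psi\in C_0^\infty$ and $\bA\in C_0^\infty(\R^3;\R^3)$ this reduces to the standard one-body statement applied slice by slice and integrated. Everything else is essentially book-keeping, and no use is made of the internal indices $\nu_j$ (the kinetic term is diagonal in them, so the bound is uniform).
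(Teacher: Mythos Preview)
Your argument is correct. Dropping the non-negative field energy, invoking Hardy's inequality together with the arithmetic--geometric splitting $|r|^{-1}\le \lambda/(2|r|^2)+1/(2\lambda)$, and upgrading via the diamagnetic inequality gives a clean form bound of the Coulomb singularity by the magnetic kinetic energy; the bookkeeping with $\lambda_{ij}$ is fine (note that in your scheme only particles with small index carry many pairs, but since you allow $N-1$ pairs for every $i$ the inequality $\sum_{j>i}2\lambda_{ij}|Q_iQ_j|\le 1/(2m_i)$ still holds). The remark about internal indices is also correct, as the Schr\"odinger kinetic energy is diagonal in them.

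This is a genuinely different route from the paper. The paper bounds the one-body Schr\"odinger operator from below via the diamagnetic \emph{Sobolev} inequality and the decomposition $|r|^{-1}\in L^{5/2}(\R^3)+L^\infty(\R^3)$, which yields a lower bound for $(-i\nabla+\bA)^2-V$ valid for any $V\in L^{5/2}+L^\infty$. Your approach instead uses \emph{Hardy's} inequality, which is more elementary and tailored to the Coulomb singularity, and has the advantage of producing an explicit constant in terms of $N$, $m_i$, $Q_i$. The paper's Sobolev route is more flexible (it would handle, say, Yukawa or other $L^{5/2}+L^\infty$ perturbations without change), while yours is sharper in spirit: Hardy captures exactly the $|r|^{-1}$ scaling and makes transparent why the kinetic energy dominates the Coulomb attraction with room to spare. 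Both rest on the diamagnetic inequality to remove the $\bA$-dependence.
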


This theorem follows easily from the diamagnetic inequality and the Sobolev inequality (see \cite{LLbook}).
\begin{theorem}[Diamagnetic Sobolev inequality]\label{thm:diasob}\hfill\\
For all $f\in C_0^\infty(\R^3)$ and all $\bA\in
C_0^\infty(\R^3;\R^3)$ there is a constant $C>0$ such that 
$$
\int_{\R^3}|(-i\nabla+\bA)f|^2\geq 
\int_{\R^3}|\nabla|f||^2\geq C\Big(\int_{\R^3}|f|^6\Big)^{1/3}.
$$
\end{theorem}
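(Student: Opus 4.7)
The plan is to prove the two inequalities separately. The first (the diamagnetic inequality) is a pointwise bound that reduces the magnetic gradient to the ordinary gradient of the modulus; the second is the classical Sobolev embedding $\dot H^1(\R^3)\hookrightarrow L^6(\R^3)$ applied to $|f|$. The only delicate point is that $|f|$ is not smooth where $f$ vanishes, so some regularization is needed to make the formal manipulation $\nabla |f| = \mathrm{Re}(\bar f\,\nabla f)/|f|$ rigorous.

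For the first inequality, I would work with the regularized modulus $|f|_\varepsilon := (|f|^2+\varepsilon)^{1/2}$, which is smooth and bounded below away from zero on any compact set. Differentiating $|f|_\varepsilon^2 = |f|^2+\varepsilon$ gives $|f|_\varepsilon\,\nabla |f|_\varepsilon = \mathrm{Re}(\bar f\,\nabla f)$. Writing $D_\bA := -i\nabla + \bA$, a direct computation shows $\bar f\, D_\bA f = -i\bar f\,\nabla f + \bA|f|^2$, so $\mathrm{Re}(\bar f\,\nabla f) = -\mathrm{Im}(\bar f\, D_\bA f)$, the magnetic potential dropping out because $\bA$ is real. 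Hence
\begin{equation*}
  |f|_\varepsilon\,|\nabla |f|_\varepsilon| \;=\; |\mathrm{Im}(\bar f\, D_\bA f)| \;\leq\; |f|\,|D_\bA f| \;\leq\; |f|_\varepsilon\,|D_\bA f|,
\end{equation*}
so $|\nabla |f|_\varepsilon|\leq |D_\bA f|$ pointwise. Squaring, integrating, and letting $\varepsilon\to0$ by monotone (or dominated) convergence, together with the fact that $|f|_\varepsilon\to |f|$ in $H^1_{\mathrm{loc}}$ and the lower semicontinuity of the Dirichlet energy, gives the first inequality $\int |\nabla |f||^2\leq \int |D_\bA f|^2$ and in particular $|f|\in H^1(\R^3)$.

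For the second inequality, I would simply invoke the critical Sobolev embedding $\dot H^1(\R^3)\hookrightarrow L^6(\R^3)$ applied to the nonnegative function $g:=|f|$, which belongs to $H^1(\R^3)$ by the previous step. This yields $\bigl(\int |g|^6\bigr)^{1/3}\leq C\int|\nabla g|^2$ with a universal constant $C>0$, and substituting $g=|f|$ finishes the chain. The Sobolev inequality itself is standard (it can be proved, for instance, via rearrangement and the sharp constant of Talenti, or from the Gagliardo--Nirenberg argument), and is cited from \cite{LLbook} in the statement.

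The main technical obstacle is the handling of the zero set $\{f=0\}$ in the first step: one must ensure that the formal identity $\nabla|f|=\mathrm{Re}(\bar f\nabla f)/|f|$ is interpreted in a way that is valid a.e.\ and that the regularization limit $\varepsilon\to 0$ actually passes through the $L^2$-norm of the gradient. This is why the $|f|_\varepsilon$ device (rather than working directly with $|f|$) is indispensable; once that point is taken care of, everything else is either algebra or a direct appeal to the classical Sobolev inequality.
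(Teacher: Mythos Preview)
Your proposal is correct and follows exactly the route the paper indicates: the paper does not give a detailed proof but simply states that the result ``follows easily from the diamagnetic inequality and the Sobolev inequality (see \cite{LLbook}).'' Your regularization argument with $|f|_\varepsilon=(|f|^2+\varepsilon)^{1/2}$ is the standard proof of the diamagnetic inequality found in that reference, and the second step is precisely the cited Sobolev embedding, so you have supplied the details the paper leaves to the reader.
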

An immediate corollary of this result (using simply H\"older's inequality) is the
following bound on one-body Schr\"odinger operators.
\begin{corollary}[Lower bound on Schr\"odinger operator]\label{cl:schlow}\hfill\\
For all $f\in C_0^\infty(\R^3)$, $\bA\in
C_0^\infty(\R^3;\R^3)$, $0\leq V_1\in L^{5/2}(\R^3)$, and $0\leq V_2\in L^\infty(\R^3)$ we have 
$$
\langle f,\big((-i\nabla+\bA)^2-V_1-V_2\big)f\rangle_{L^2}\geq -C\Big(\int V_1^{5/2}+\|V_2\|_\infty\Big)\|f\|^2_{L^2}.
$$
\end{corollary}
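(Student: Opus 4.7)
The plan is to combine Theorem~\ref{thm:diasob} with Hölder's inequality and an interpolation between $L^2$ and $L^6$, and then absorb the resulting $L^6$ term back into the magnetic kinetic energy via Young's inequality.

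First I would handle $V_2$ trivially: by Hölder (or just positivity),
\[
\int V_2|f|^2\;\leq\;\|V_2\|_\infty\,\|f\|_{L^2}^2.
\]
For $V_1$ I would apply Hölder with exponents $\tfrac{5}{2}$ and $\tfrac{5}{3}$ to get $\int V_1|f|^2\leq \|V_1\|_{5/2}\,\|f\|_{10/3}^2$. Since $\tfrac{3}{10}=\tfrac{3}{5}\cdot\tfrac{1}{6}+\tfrac{2}{5}\cdot\tfrac{1}{2}$, log-convexity of $L^p$-norms gives the interpolation
\[
\|f\|_{10/3}^2\;\leq\;\|f\|_{L^6}^{6/5}\,\|f\|_{L^2}^{4/5}.
\]

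Next, by Theorem~\ref{thm:diasob},
\[
\langle f,(-i\nabla+\bA)^2 f\rangle\;\geq\;c_0\,\|f\|_{L^6}^{2}
\]
for some universal constant $c_0>0$. Combining with the two potential bounds,
\[
\langle f,\bigl((-i\nabla+\bA)^2-V_1-V_2\bigr)f\rangle
\;\geq\;c_0\|f\|_{L^6}^2-\|V_1\|_{5/2}\|f\|_{L^6}^{6/5}\|f\|_{L^2}^{4/5}-\|V_2\|_\infty\|f\|_{L^2}^2.
\]
To the middle term I would apply Young's inequality in the form $uv\leq \varepsilon u^{5/3}+C_\varepsilon v^{5/2}$ with $u=\|f\|_{L^6}^{6/5}$ and $v=\|V_1\|_{5/2}\|f\|_{L^2}^{4/5}$, noting that $u^{5/3}=\|f\|_{L^6}^{2}$ and $v^{5/2}=\|V_1\|_{5/2}^{5/2}\|f\|_{L^2}^{2}=(\int V_1^{5/2})\|f\|_{L^2}^{2}$. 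Choosing $\varepsilon$ so that $\varepsilon<c_0$ allows me to absorb the $\|f\|_{L^6}^{2}$ contribution into the kinetic term, leaving only a multiple of $\bigl(\int V_1^{5/2}+\|V_2\|_\infty\bigr)\|f\|_{L^2}^2$ on the right, which is exactly the claimed inequality.

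There is no real obstacle here: the argument is purely a matter of matching exponents in Hölder, interpolation, and Young so that the power of $\|V_1\|_{5/2}$ comes out to $\tfrac{5}{2}$ (yielding $\int V_1^{5/2}$) and the dependence on $f$ collapses to $\|f\|_{L^2}^2$. The only minor thing to check is that the interpolation exponent $\theta=\tfrac{2}{5}$ between $L^2$ and $L^6$ is the one dictated by $L^{10/3}$, and that this exponent is compatible with the conjugate pair $(\tfrac{5}{3},\tfrac{5}{2})$ needed in Young — which, as above, it is.
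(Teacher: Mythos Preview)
Your proof is correct and follows essentially the same route the paper indicates: the paper merely states that the corollary follows from Theorem~\ref{thm:diasob} ``using simply H\"older's inequality,'' and your argument supplies exactly the standard details (H\"older to pass to $\|f\|_{10/3}$, interpolation between $L^2$ and $L^6$, then Young to absorb the $L^6$ term into the kinetic energy). There is nothing to correct.
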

We leave it to the reader to prove Theorem~\ref{thm:stabfirstkind}
from this corollary and the observation that $|r|^{-1}\in
L^{5/2}(\R^3)+L^\infty(\R^3)$.

The stability of the first kind holds even if the field energy 
$$
\frac1{8\pi}\int|\bE_\perp|^2+|\bB|^2
$$
is ignored. Moreover, it also holds if $\bA$ is quantized, i.e., if we
replace $\bA(r)$ by the operator (\ref{eq:QbA}). This last statement
follows since $\bA(r)$ is a commuting family (indexed by $r$) and thus
may be considered as a classical field. ´

\subsection{Grand canonical stability}
We turn to the question of grand canonical stability. 
We will study this in the simple special case of two species of
identical fermions with opposite charges. For grand canonical
stability it is not necessary that all particles are fermions. It is,
in fact,
enough that all particles with one sign of the charge, i.e., say, all
negatively charged particles form a
collection of finitely many species of fermions. Stability of matter
in this more general setting was proved in \cite{dl1,dl2,lt} (see
also~\cite{l-rev}) and the
case of grand canonical stability was treated in \cite{HLS3}.

One of the main ingredients in the proof of grand canonical stability
is the use of the celebrated Lieb-Thirring inequality \cite{lt} (see
also \cite{LSbook}) which replaces the Sobolev inequality which we
used in the proof of stability of the first kind.

\begin{theorem}[Lieb-Thirring inequality]\label{thm:LT}\hfill\\ Assume $0\leq V\in
  L^{5/2}(\R^3)$ and $\bA\in C_0(\R^3;\R^3)$ then for all $N$ 
we have on the antisymmetric subspace
  $\cH_N^-=\bigwedge^N[L^2(\Omega)]^\nu$ the operator inequality 
$$
\sum_{j=1}^N\left(\frac1{2m}(-i\nabla_j+\bA(r_j))^2-V(r_j)\right)\geq
-Cm^{3/2}\nu\int V^{5/2},
$$
for a universal constant $C>0$.
In particular, it is independent of the number $N$ of particles. 
\end{theorem}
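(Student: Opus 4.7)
The plan is to reduce the $N$-body operator bound to a Lieb-Thirring estimate on the sum of negative eigenvalues of the one-particle magnetic Schr\"odinger operator $h=\frac{1}{2m}(-i\nabla+\bA)^2-V$. For a normalized $\psi\in\cH_N^-$ let $\hGamma_\psi$ denote the spin-traced one-particle density matrix on $L^2(\Omega)$,
$$
\hGamma_\psi(x,y)=N\sum_{s,s_2,\ldots,s_N=1}^{\nu}\int_{\Omega^{N-1}}\psi(x,s,z_2,s_2,\ldots,z_N,s_N)\overline{\psi(y,s,z_2,s_2,\ldots,z_N,s_N)}\,dz_2\cdots dz_N.
$$
Since $h$ acts trivially on the spin variable, $\langle\psi,\sum_j h_j\psi\rangle=\Tr_{L^2(\Omega)}(h\,\hGamma_\psi)$. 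Antisymmetry of $\psi$ together with the summation over $\nu$ internal states gives the Pauli-type operator bound $0\le\hGamma_\psi\le\nu\,I$, and consequently $\Tr(h\,\hGamma_\psi)\ge-\nu\Tr h_-$, where $h_-$ is the negative spectral part of $h$. It therefore suffices to prove the one-body Lieb-Thirring bound $\Tr h_-\le\const\, m^{3/2}\int V^{5/2}$.

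\textbf{The magnetic one-body bound.} By the scaling $h=\frac{1}{2m}\bigl[(-i\nabla+\bA)^2-2mV\bigr]$, this is equivalent to
$$
\Tr\bigl[(-i\nabla+\bA)^2-W\bigr]_-\le\const\int_{\R^3}W_+^{5/2}
$$
for $W=2mV$. I prove this by combining the Cwikel-Lieb-Rozenblum (CLR) bound on the number of negative eigenvalues of the magnetic operator,
$$
N\bigl((-i\nabla+\bA)^2-U\bigr)\le\const\int_{\R^3}U_+^{3/2},
$$
with a layer-cake integration in the coupling constant. Applying CLR to $U=W-\lambda$ bounds the number of eigenvalues of $(-i\nabla+\bA)^2-W$ lying below $-\lambda$, so
$$
\Tr\bigl[(-i\nabla+\bA)^2-W\bigr]_-=\int_0^\infty N\bigl((-i\nabla+\bA)^2-(W-\lambda)\bigr)d\lambda\le\const\int_{\R^3}\!\!\int_0^{W(x)}(W(x)-\lambda)^{3/2}d\lambda\,dx,
$$
and the inner integral equals $\tfrac{2}{5}W(x)^{5/2}$. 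Undoing the scaling converts the prefactor $(2m)^{5/2}/(2m)$ into $m^{3/2}$, producing the stated bound with a constant independent of $N$.

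\textbf{Main obstacle.} The only non-routine ingredient is the CLR bound for the magnetic operator. Any of the classical proofs of CLR for $-\Delta-W$ (Lieb's Wiener-integral representation, Rozenblum's weak-$L^p$ interpolation, or Cwikel's duality argument) extends to the magnetic setting via Kato's diamagnetic inequality for heat kernels, $|e^{-t[(-i\nabla+\bA)^2-W]}(x,y)|\le e^{-t[-\Delta-W]}(x,y)$, which transfers the estimate without loss in the constant. Everything else---the reduction to the one-body problem and the coupling-constant integration---is standard.
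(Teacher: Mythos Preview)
The paper does not actually prove this theorem: it is stated as a known input with references to \cite{lt} and \cite{LSbook}, followed only by a remark on its semiclassical meaning and its contrast with Corollary~\ref{cl:schlow}. There is therefore no proof in the paper to compare your argument against.

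That said, your argument is correct and is one of the standard routes to the inequality. The reduction to a one-body problem via the spin-summed density matrix and the Pauli bound $0\le\hGamma_\psi\le\nu I$ is exactly how the fermionic statistics enters and produces the factor $\nu$; the layer-cake passage from the CLR bound to the $\gamma=1$ Lieb--Thirring moment is the Aizenman--Lieb trick; and Kato's diamagnetic heat-kernel domination does indeed transfer CLR to the magnetic operator without loss. One small point you might make explicit is the domain issue: the Dirichlet realization of $h$ on $\Omega$ dominates, in the quadratic-form sense, the operator on all of $\R^3$, so $\Tr h_-$ computed on $\Omega$ is bounded by $\Tr h_-$ on $\R^3$, which is where your CLR estimate lives. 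With that remark the proof is complete.
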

Note the apparent similarity between the Lieb-Thirring inequality and 
the Corollary~\ref{cl:schlow}, to the Sobolev inequality. The
important difference is that Corollary~\ref{cl:schlow} would only 
imply that
$$ 
\sum_{j=1}^N\left(\frac1{2m}(-i\nabla_j+\bA(r_j))^2-V(r_j)\right)\geq
-Cm^{3/2}N\int V^{5/2},
$$
which, in fact, holds on all of $\cH_N$ (left as an exercise for the
reader). The lower bound with a constant independent of $N$ holds only
on the fermionic subspace. 

The Lieb-Thirring inequality relates the energy of a gas of independent
particles to the corresponding classical energy. The classical
energy (ignoring internal degrees of freedom) would indeed be
$$
\iint\limits_{\frac1{2m}(p^2+\bA(r))^2-V(r)\leq0} \frac1{2m}(p^2+\bA(r))^2-V(r)
dr dp=-\frac{8\pi}{15} m^{3/2}\int V^{5/2}.´
$$
As a consequence of the Lieb-Thirring inequality we have the following
lower bound on the kinetic energy of $N$ fermions confined to move in a
bounded volume.
\begin{corollary}\label{cl:LT} If the open set $\Omega$ has finite
  volume $|\Omega|$ then
  in $\bigwedge^N[L^2(\Omega)]^\nu$ we have a universal constant
  $C>0$ such that
\begin{equation}\label{eq:kinenineq}
  \sum_{j=1}^N\frac1{2m}(-i\nabla_j+\bA(r_j))^2
  \geq Cm^{-1}\nu^{-2/3}N^{5/3}|\Omega|^{-2/3}.
\end{equation}
\end{corollary}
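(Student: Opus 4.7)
}
The plan is to use the Lieb--Thirring inequality (Theorem~\ref{thm:LT}) with a cleverly chosen potential and then optimize. The right choice of potential is the characteristic function of $\Omega$ itself: take $V(r)=\lambda\,\one_\Omega(r)$ for a parameter $\lambda>0$ to be fixed later. (If the non-continuity of $\one_\Omega$ causes worries about the hypotheses of Theorem~\ref{thm:LT}, one first proves the estimate for a mollification $V_\varepsilon$ of $\lambda\one_\Omega$ and then passes to the limit; since only $\int V^{5/2}=\lambda^{5/2}|\Omega|$ enters the right-hand side, this limit is harmless.)

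First I would apply Theorem~\ref{thm:LT} to obtain, on $\bigwedge^N[L^2(\R^3)]^\nu$, the operator inequality
\begin{equation*}
\sum_{j=1}^N\Bigl(\tfrac{1}{2m}(-i\nabla_j+\bA(r_j))^2 - V(r_j)\Bigr)\ \geq\ -C\,m^{3/2}\nu\int V^{5/2}\ =\ -C\,m^{3/2}\nu\,\lambda^{5/2}|\Omega|.
\end{equation*}
The key observation is that on the subspace $\bigwedge^N[L^2(\Omega)]^\nu\subset\bigwedge^N[L^2(\R^3)]^\nu$ (extending wave functions by $0$ outside $\Omega$) the multiplication operator $V(r_j)=\lambda\,\one_\Omega(r_j)$ acts simply as $\lambda\,I$, so $\sum_{j=1}^N V(r_j)=\lambda N$. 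Rearranging,
\begin{equation*}
\sum_{j=1}^N\tfrac{1}{2m}(-i\nabla_j+\bA(r_j))^2\ \geq\ \lambda N\ -\ C\,m^{3/2}\nu\,\lambda^{5/2}|\Omega|.
\end{equation*}

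Finally I would optimize the right-hand side in $\lambda$. Differentiating in $\lambda$ gives the stationarity condition $N=\tfrac{5}{2}C\,m^{3/2}\nu\,\lambda^{3/2}|\Omega|$, hence $\lambda$ proportional to $m^{-1}\nu^{-2/3}N^{2/3}|\Omega|^{-2/3}$; substituting back, both terms $\lambda N$ and $Cm^{3/2}\nu\lambda^{5/2}|\Omega|$ are of the same order, yielding a bound proportional to $m^{-1}\nu^{-2/3}N^{5/3}|\Omega|^{-2/3}$, as required.

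There is no real obstacle here beyond the (minor) regularity issue with $\one_\Omega$ mentioned above; the heart of the argument is that the $N$-independence of the Lieb--Thirring constant is precisely what converts the naïve bound (which would be linear in $N$) into the fermionic $N^{5/3}$ scaling. The same argument using Corollary~\ref{cl:schlow} in place of Theorem~\ref{thm:LT} would only give a lower bound of order $m^{-1}\nu^{-2/3}N\,|\Omega|^{-2/3}$ valid on all of $\cH_N$, which is weaker by a factor $N^{2/3}$ and reflects the absence of the Pauli exclusion principle in that estimate.
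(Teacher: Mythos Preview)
Your proposal is correct and follows essentially the same route as the paper: choose the potential to be a constant $\lambda$ on $\Omega$, apply the Lieb--Thirring inequality to obtain $\sum_j\frac{1}{2m}(-i\nabla_j+\bA(r_j))^2\geq \lambda N - Cm^{3/2}\nu\lambda^{5/2}|\Omega|$, and optimize in $\lambda$. Your added remarks about mollifying $\one_\Omega$ and about the weaker $N$-linear bound obtainable from Corollary~\ref{cl:schlow} are accurate and in the spirit of the paper's own discussion.
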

\begin{proof} If we use the Lieb-Thirring inequality with a constant
  potential $V$ we obtain
$$
\sum_{j=1}^N\frac1{2m}(-i\nabla_j+\bA(r_j))^2\geq NV-
Cm^{3/2}\nu V^{5/2}|\Omega|
$$
which gives the estimate above after optimization in $V$.
\end{proof}
We now consider the situation with two species of identical
non-relativistic fermions
with masses $m_\pm>0$ and charges $\pm Q_\pm$ where $Q_\pm>0$. For
simplicity we assume that there are no internal degrees of freedom, i.e.,
$\nu_\pm=1$. In this
case the Hamiltonian with particle numbers $N_\pm$ for the two species is 
\begin{eqnarray*}
  H_{N_+,N_-}&=&\sum_{j=1}^{N_+}
  \frac1{2m_+}(-i\nabla_j+ Q_+\bA(r_j))^2
  +\sum_{j=N_++1}^{N_++N_-}\frac1{2m_-}(-i\nabla_j-Q_-\bA(r_j))^2+V_{\rm
    C}\\&&+\frac1{8\pi}\int (|E_\perp|^2+|B|^2)
\end{eqnarray*}
where the Coulomb energy is
$$V_{\rm C}=
   -\sum_{j=1}^{N_+}\sum_{i=N_++1}^{N_++N_-}\frac{Q_+Q_-}{|r_i-r_j|}
   +\sum_{1\leq i<j\leq N_+}\frac{Q_+^2}{|r_i-r_j|}
   +\sum_{N_+< i<j\leq N_++N_-}\frac{Q_-^2}{|r_i-r_j|}.
$$
Note that we have numbered the positively charged
particles  $1,\ldots,N_+$ and the negatively charged particles 
$N_++1,\ldots, N_++N_-$.
The Hamiltonian acts on the subspace 
$$
\cD=\left(\bigwedge^{N_+}L^2(\Omega)\otimes
  \bigwedge^{N_+}L^2(\Omega)\right)\cap C_0^\infty(\Omega^{N_++N_-}).
$$
\begin{theorem}[Simple case of grand canonical stability]\hfill\\ The grand
  canonical energy in the finite volume set $\Omega\subseteq\R^3$ 
\begin{eqnarray*}
  E(\mu,\Omega)=\inf\Big\{\langle
    \psi,H_{N_+,N_-}\psi\rangle+\mu(N_+N_-)&|& \psi\in \cD,\
    \|\psi\|=1,\\&& \bE_\perp,\bA\in C_0^\infty(\R^3;\R^3)\Big\}
\end{eqnarray*}
satisfies the stability bound
$$
E(\mu,\Omega)\geq -C(\mu,m_\pm,Q_\pm)|\Omega|,
$$
for a constant $C(\mu,m_\pm,Q_\pm)>0$ depending only on $\mu,m_\pm,Q_\pm$.
\end{theorem}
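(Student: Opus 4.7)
The plan is to combine a Dyson--Lenard type electrostatic reduction with the Lieb--Thirring kinetic bound, in the spirit of the original proof of stability of matter but adapted to the grand canonical setting. First I discard the nonnegative field term $\frac{1}{8\pi}\int(|\bE_\perp|^2+|\bB|^2)$ and fix a test state $\psi\in\cD$ with definite particle numbers $N_\pm$ and one-particle densities $\rho_\pm(r)$; the Hamiltonian $H_{N_+,N_-}$ preserves these sectors so it suffices to work at fixed $(N_+,N_-)$. The vector potential $\bA$ is kept inside the magnetic Laplacians, which is exactly the form required by Theorem~\ref{thm:LT}. Taking the expectation of that theorem in $\psi$ for each species and optimizing over the auxiliary potential $V$ by Legendre duality (using $\nu_\pm=1$) yields the density bound
\[
\langle\psi, K_\pm \psi\rangle \ \geq\ c\, m_\pm^{-1} \int \rho_\pm^{5/3}\,dr,
\]
for a universal $c>0$. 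Antisymmetry of each species is essential here; Corollary~\ref{cl:schlow} alone would only give a bound linear in $N_\pm$.

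The main obstacle is the electrostatic lower bound
\[
\langle\psi, V_{\rm C}\psi\rangle \ \geq\ -C(Q_+,Q_-)\!\int (\rho_+(r)+\rho_-(r))^{4/3}\,dr.
\]
I would prove this by smearing each point charge over a ball whose radius is a constant times the distance to its nearest same-sign neighbour, using that the Coulomb potential of a uniform ball dominates the point-charge potential outside it, and then controlling the resulting self-energy contributions in expectation by a Voronoi-cell argument of Lieb--Thirring/Baxter type. Handling two species with two distinct charges $Q_\pm$ simultaneously, while keeping track of the $Q_\pm$-dependence of the constants, is the main bookkeeping issue and the step where I expect the main difficulty.

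Finally I combine the two estimates. Young's inequality $\rho^{4/3}\leq \frac{\varepsilon}{2}\rho^{5/3}+\frac{1}{2\varepsilon}\rho$, with $\varepsilon$ chosen small enough that the Coulomb correction is absorbed into half of the kinetic bound, gives
\[
\langle\psi, H_{N_+,N_-}\psi\rangle+\mu(N_++N_-) \ \geq\ \frac{c}{2}\!\int(\rho_+^{5/3}+\rho_-^{5/3})\,dr - (C_1-\mu)(N_++N_-),
\]
for a constant $C_1=C_1(m_\pm,Q_\pm)>0$. If $\mu\geq C_1$ the right-hand side is already nonnegative and we are done. Otherwise, since $\psi$ is supported in $\Omega^{N_++N_-}$ with $|\Omega|<\infty$, H\"older's inequality yields $\int \rho_\pm^{5/3}\geq |\Omega|^{-2/3}N_\pm^{5/3}$. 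Substituting and setting $N=N_++N_-$ reduces the problem to minimizing $c'|\Omega|^{-2/3}N^{5/3}-(C_1-\mu)N$ over $N\geq 0$; the unique minimum is attained at $N_\ast$ proportional to $|\Omega|$ and has value proportional to $-|\Omega|$, giving the required bound $E(\mu,\Omega)\geq -C(\mu,m_\pm,Q_\pm)|\Omega|$.
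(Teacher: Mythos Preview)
Your overall architecture (electrostatic reduction followed by Lieb--Thirring) matches the paper, but the central electrostatic step is wrong, not merely difficult bookkeeping. The inequality
\[
\langle\psi, V_{\rm C}\psi\rangle \ \geq\ -C(Q_+,Q_-)\!\int (\rho_+ + \rho_-)^{4/3}
\]
is \emph{false} for two oppositely charged species. Take $N_+=N_-=1$ and $\psi(r_1,r_2)=\epsilon^{-3/2}f((r_1-r_2)/\epsilon)\,g((r_1+r_2)/2)$ with $f,g$ fixed normalized smooth functions. Then $\langle V_{\rm C}\rangle=-\epsilon^{-1}\int|f(s)|^2|s|^{-1}ds\to-\infty$, while $\rho_\pm\to|g|^2$ and hence $\int(\rho_++\rho_-)^{4/3}$ stays bounded as $\epsilon\to0$. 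The point is that opposite charges can be correlated to sit arbitrarily close together without this being visible in the one-body densities; a Lieb--Oxford type bound controls only the \emph{exchange} part of a repulsive interaction, not attractive correlations. Your proposed smearing by the nearest \emph{same}-sign distance also fails at the level of the pointwise inequality: with that choice the repulsive terms are preserved exactly (the balls of same-sign particles are disjoint), but for the attractive terms Newton's theorem gives the smeared interaction $\geq$ the point interaction, i.e.\ the inequality on $V_{\rm C}$ goes the wrong way.

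The paper's remedy is precisely to reverse your choice: smear each particle over a ball of radius $\tfrac12\delta_j$, where $\delta_j$ is the distance to the nearest \emph{opposite}-sign particle. Then the attractive terms are reproduced \emph{exactly} (the smearing balls never reach an opposite charge), the repulsive terms decrease, and positivity of the Coulomb kernel yields the pointwise bound $V_{\rm C}\geq -\tfrac{12}{5}\sum_j Q_j^2\delta_j^{-1}$. The crucial structural gain is that for a positive particle $j$, $\delta_j^{-1}=\delta(r_j)^{-1}$ with $\delta(r)=\min_i|r-r_i^{-}|$ depending only on the \emph{negative} particle positions; so one applies Lieb--Thirring in its \emph{potential} form (Theorem~\ref{thm:LT}) directly to this one-body potential, rather than passing through the density form. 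Your final step (combining a $N^{5/3}|\Omega|^{-2/3}$ kinetic lower bound with the error and optimizing over $N_\pm$) is then essentially what the paper does.
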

\begin{proof}
We define the distance from particle $j$ to the nearest particle of
the opposite charge, i.e., 
\begin{eqnarray*}
  \delta_j&=&\delta_j(r_1,\ldots,r_{N_++N_-})\\&=&
  \left\{\begin{array}{ll}\min_{i=N_++1,\ldots,N_++N_-}|r_i-r_j|,&\hbox{if
      }j=1,\ldots,N_+\\
      \min_{i=1,\ldots,N_+}|r_i-r_j|,&\hbox{if
      }j=N_++1,,\ldots,N_++N_-
    \end{array}\right..
\end{eqnarray*}
Let $\chi_j=\frac6{\pi\delta_j^3}{\bf 1}_{B(r_j,\delta_j/2)}$, where
$B(r_j,\delta_j/2)$ denotes the ball centered at $r_j$ with radius
$\delta_j/2$ and ${\bf 1}_{B(r_j,\delta_j/2)}$ is its characteristic
function. Note that $\int\chi_j=1$. 

We will use the following two observations:

\noindent{\bf Observation 1:}
$$
\sum_{j=1}^{N_+}\sum_{i=N_++1}^{N_++N_-}\frac{Q_+Q_-}{|r_i-r_j|}
=\sum_{j=1}^{N_+}\sum_{i=N_++1}^{N_++N_-}{Q_+Q_-}
\iint\frac{\chi_j(r)\chi_i(r')}{|r-r'|}drdr'
$$

\noindent{\bf Observation 2:}
$$
\sum_{1\leq i<j\leq N_+}\frac{Q_+^2}{|r_i-r_j|}
\geq \sum_{1\leq i<j\leq N_+}{Q_+^2}
\iint\frac{\chi_j(r)\chi_i(r')}{|r-r'|}drdr'
$$
and likewise for the $Q_-^2$-terms .

The observations follow from Newton's Theorem:
$$
\frac6{\pi\delta^3}\int_{|r'|<\delta/2}|r-r'|^{-1}dr'
=\left\{\begin{array}{ll} |r|^{-1},& \hbox{if }|r|>\delta/2\\
    \delta^{-1}(3-4|r|^2\delta^{-2}),& \hbox{if }|r|<\delta/2\\
  \end{array}\right.\leq |r|^{-1}.
$$
{F}rom the two observations above we arrive at 
the following lower bound on the Coulomb energy
$$
V_{\rm C}\geq \frac12\iint\frac{\rho(r)\rho(r')}{|r-r'|}drdr'
-\frac{12}{5}\sum_{j=1}^{N_+}Q_+^2\delta_j^{-1}-\frac{12}{5}\sum_{j=N_++1}^{N_++N_-}Q_-^2\delta_j^{-1},
$$
where we introduced the smeared charge density 
$$
\rho(r)=\sum_{j=1}^{N_+}Q_+\chi_j(r)-\sum_{j=N_++1}^{N_++N_-}Q_-\chi_j(r)
$$
and used that 
$$
\iint\frac{\chi_j(r)\chi_j(r')}{|r-r'|}drdr'=\frac{12}{5}\delta_j^{-1}.
$$
Using now the positive type (i.e., positivity of the Fourier
transform) of the Coulomb kernel we find 
$$
V_{\rm C}\geq
-\frac{12}{5}\sum_{j=1}^{N_+}Q_+^2\delta_j^{-1}
-\frac{12}{5}\sum_{j=N_++1}^{N_++N_-}Q_-^2\delta_j^{-1}.
$$ 
A similar application of the positive type of the Coulomb kernel goes
back to an early paper of Onsager~\cite{on}, who might have been the
first to address the issue of grand canonical stability. 
Better lower bounds on the Coulomb energy can be derived by more
sophisticated use of the same ideas (see e.g.\ \cite{b,ly,LSbook}).

We are led to the following lower bound on the Hamiltonian
$$
H_{N_+,N_-}\geq H_{N_+}+H_{N_-}+\mu (N_++N_-)
$$
where 
$$
H_{N_+}=\sum_{j=1}^{N_+}\frac1{2m_+}(-i\nabla_j+ Q_+\bA(r_j))^2-
\frac{12}{5}\sum_{j=1}^{N_+}Q_+^2\delta_j^{-1}
$$ 
and likewise for $H_{N_-}$.
Observe now that for $j=1,\ldots,N_+$ the length $\delta_j$ depends on the
position $r_j$ and the positions $r_{N_++1},\ldots,r_{N_++N_-}R$ 
of the negatively charged particles
but not on the positions of the other
positively charged particles. In other words we may write
$$
-\frac{12}{5}\sum_{j=1}^{N_+}Q_+^2\delta_j^{-1}
=-\frac{12}{5}\sum_{j=1}^{N_+}Q_+^2\delta(r_j)^{-1},
$$
where $\delta(r)=\min_{i=N_++1,\ldots,N_++N_-}|r_i-r|$. We thus have a
potential parameterized by the positions of the negatively charged
particles. This observation allows us to use the Lieb-Thirring
inequality Theorem~\ref{thm:LT}. If we choose a parameter $R$ (to be
optimized over) and divide the space into the region where
$\delta(r)<R$ (a union of $N_-$ possibly intersecting balls of radius R) 
and $\delta(r)>R$ we obtain from the Lieb-Thirring inequality
\begin{eqnarray*}
  H_{N_+}&\geq&\frac12\sum_{j=1}^{N_+}\frac1{2m_+}(-i\nabla_j+
  Q_+\bA(r_j))^2\\&&
  -CQ_+^5m_+^{3/2}\Big(N_-\int_{|r|<R}|r|^{-5/2}dr+R^{-5/2}|\Omega|\Big)\\
  &\geq&Cm_+^{-1}N_+^{5/3}|\Omega|^{-2/3}-CQ_+^5m_+^{3/2}(N_-R^{1/2}+|\Omega| R^{-5/2})\\
  &=&Cm_+^{-1}N_+^{5/3}|\Omega|^{-2/3}-CQ_+^5m_+^{3/2}N_-^{5/6}|\Omega|^{1/6},
\end{eqnarray*}
where we saved half of the kinetic energy in the first inequality and
estimated it by Corollary~\ref{cl:LT} in the second inequality. 
Finally, we optimized over the parameter $R>0$.
Since the corresponding estimate holds for $H_{N_-}$ we finally get
the lower bound
\begin{eqnarray*}
  H_{N_+,N_-}&\geq&
  Cm_+^{-1}N_+^{5/3}|\Omega|^{-2/3}+Cm_-^{-1}N_-^{5/3}|\Omega|^{-2/3}\\&&
  -CQ_+^5m_+^{3/2}N_-^{5/6}|\Omega|^{1/6} 
  -CQ_-^5m_-^{3/2}N_+^{5/6}|\Omega|^{1/6}+\mu(N_++N_-)\\
  &\geq&-C(\mu,m_\pm,Q_\pm)|\Omega|,
\end{eqnarray*}
where we have minimized in $N_\pm$. 
We leave it to the reader to determine the exact form of the constant
$C(\mu,m_\pm,Q_\pm)$.
\end{proof}
The same proof would work also if periodic external electric and
magnetic fields were present, e.g., a situation describing a
crystal structure.
 
As should also be clear from the proof the field energy 
$$
\frac1{8\pi}\int|\bE_\perp|^2+|\bB|^2
$$
plays no role for stability in the present case.  
Moreover, as in the case discussed for stability of the first kind
we could also have considered $\bA$ quantized.

\subsection{Existence of the thermodynamic limit}
We will briefly discuss existence of the thermodynamic limit
(\ref{eq:thermolimit}). This was first proved by Lieb and Lebowitz 
\cite{lile} for the case of several species of particles where all the
species of, say, negatively charged particles are fermions. 
The method does not allow for an exterior periodic potential or
magnetic field. In particular, the method does work in the case where
the nuclei are confined to a periodic crystal arrangement. This case was
later treated by Fefferman in \cite{Fe}. In \cite{HLS2,HLS3} an
abstract method was developed to conclude existence of thermodynamic
limits for Coulomb systems in great generality including periodic
background potentials.

Indeed, the method relies on establishing general abstract properties of
the energy function that implies existence of the thermodynamic limit.

We will just give a brief overview of the method. For the details and 
more precise definitions and assumptions we refer to
\cite{HLS2,HLS3}. 

Let $\cM= \{\Omega\subset\R^3\
  \hbox{open and bounded}\}$ and consider a map $E:\cM\to\R$ with the
  following properties.  Given a  
  function $\alpha:[0,\infty)$ with
  $\lim_{\ell\to\infty}\alpha(\ell)=0$, a
  subset $\cR\subseteq\cM$ of sufficiently regular sets, constants
  $\kappa,\delta>0$, and a
  reference set $\triangle\in\cR$, such that

\textbf{(A1)} {\it (Normalization).} $E(\emptyset)=0$.

\medskip

\textbf{(A2)} {\it (Stability).} $\forall \Omega\in{\cM}$, 
${} E(\Omega)\geq -\kappa|\Omega|$.

\medskip

\textbf{(A3)} {\it (Translation Invariance).} $\forall
\Omega\in{\cR}$, $\forall z\in\Z^3$, ${}
E(\Omega+z)=E(\Omega)$.

\medskip

\textbf{(A4)} {\it (Continuity).} $\forall\Omega,\Omega'\in{\cR}$,
with $\Omega'\subseteq\Omega$
and $\hbox{d}(\partial\Omega,\partial\Omega')>\delta$,

\centerline{${} E(\Omega)\leq E(\Omega')+\kappa|\Omega\setminus\Omega'|+|\Omega|\alpha(|\Omega|).$}

\medskip

\textbf{(A5)} {\it (Subaverage Property).} For all $\Omega\in{\cM}$, we have
\begin{equation}{} 
  E(\Omega)\geq \frac{1}{|\ell\triangle|}\int_{\R^3\rtimes
    SO(3)}E\big(\Omega\cap g\cdot(\ell\triangle)\big)\,d\lambda(g)-|\Omega|_{\rm r}\,\alpha(\ell)
\label{sliding_equation}
\end{equation}
where $d\lambda$ is the Haar-measure of $\R^3\rtimes
  SO(3)$, (the group of isometries of $\R^3$) and
$|\Omega|_{\rm r}:=\inf\{|\tilde\Omega|,\quad
\Omega\subseteq\tilde\Omega,\quad \tilde\Omega\in{\cR}\}$ is a 
regularized volume of $\Omega$.

If $E:\cM\to\R$ satisfies (A1--A5) then it is not very difficult to
show that the thermodynamic limit
$$
\lim_{\ell\to\infty} |\ell\triangle|^{-1}E(g\ell\triangle)
$$
exists for all $g\in\R^3\rtimes SO(3)$, i.e., it exists
for all rotations or translations of the reference set $\triangle$.
Under slightly more restrictive assumptions which we will not repeat
here the limit holds for a very large class of regular sets. 

We see that (A2) is grand canonical stability. The difficult property
to establish for Coulomb systems is (A5). For $\triangle$ being a
simplex it is a consequence of the
following result of Graf and Schenker \cite{GS} generalizing a
somewhat similar
estimate by Conlon, Lieb and Yau \cite{CLY}:

\begin{theorem}[Graf-Schenker inequality]\label{Prop_Graf_Schenker}\hfill\\ Let
  $\triangle$ be a simplex in $\R^3$. There exists a constant $C$ such
  that for any $N\in\N$, $Q_1,...,Q_N\in\R$, $r_1,\dots,r_N\in\R^3$ and any
  $\ell>0$,
  \begin{eqnarray*}
    \sum_{1\leq i<j\leq N} \frac{Q_iQ_j}{|r_i-r_j|}&\geq &
    \frac{1}{|\ell\triangle|}\int_{\R^3\rtimes
      SO(3)}
    \sum_{1\leq i<j\leq N}
    \frac{Q_iQ_j{\bf 1}_{g\ell\triangle}(r_i)
      {\bf 1}_{g\ell\triangle}(r_j)}{|r_i-r_j|} d\lambda(g)
    \\&&-\frac{C}{\ell}\sum_{j=1}^NQ_j^2.
  \end{eqnarray*}
\end{theorem}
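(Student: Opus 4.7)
The plan is to rewrite the difference of the two sides as a single quadratic form $\sum_{i<j}Q_iQ_j W_\ell(|r_i-r_j|)$ in the charges, for a specific radial kernel $W_\ell$ on $\R^3$, and then to establish positivity of this form up to a ``diagonal'' boundary contribution of order $1/\ell$.

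First I would define
$$\chi_\ell(|x-y|):=\frac{1}{|\ell\triangle|}\int_{\R^3\rtimes SO(3)}\mathbf{1}_{g\ell\triangle}(x)\mathbf{1}_{g\ell\triangle}(y)\,d\lambda(g),$$
which depends only on $|x-y|$ by left-invariance of $d\lambda$, and set $W_\ell(r):=(1-\chi_\ell(r))/r$. The inequality to prove becomes $\sum_{i<j}Q_iQ_j W_\ell(|r_i-r_j|)\ge -(C/\ell)\sum_j Q_j^2$. Using $|\ell\triangle|^{-1}\int\mathbf{1}_{g\ell\triangle}(x)\,d\lambda(g)=1$ together with $\mathbf{1}_{g\ell\triangle}^2=\mathbf{1}_{g\ell\triangle}$ and expanding the square, one obtains the sum-of-squares identity
$$1-\chi_\ell(|x-y|)=\frac{1}{2|\ell\triangle|}\int\bigl[\mathbf{1}_{g\ell\triangle}(x)-\mathbf{1}_{g\ell\triangle}(y)\bigr]^2\,d\lambda(g).$$
The integrand equals $1$ exactly when the boundary of $g\ell\triangle$ separates $x$ from $y$, so a direct geometric estimate yields $1-\chi_\ell(r)\le c|\partial\triangle|\,r/(|\triangle|\ell)$ for small $r$; in particular the continuous extension $W_\ell(0):=\lim_{r\downarrow 0}W_\ell(r)$ exists and satisfies $W_\ell(0)=O(1/\ell)$.

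The main step is to prove that the radial function $W_\ell$ on $\R^3$ is of positive type, i.e.\ $\widehat{W_\ell}\ge 0$ as a tempered distribution. Granted this, I would mollify $\mu:=\sum_j Q_j\delta_{r_j}$ and apply Plancherel to conclude
$$F_\ell:=\sum_{i,j}Q_iQ_j W_\ell(|r_i-r_j|)\ge 0.$$
Splitting off the $i=j$ contribution yields
$$2\!\!\sum_{1\le i<j\le N}\!\!Q_iQ_j W_\ell(|r_i-r_j|)=F_\ell-W_\ell(0)\sum_{j=1}^N Q_j^2\ge-\frac{C}{\ell}\sum_{j=1}^N Q_j^2,$$
which is precisely the claimed bound.

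The hard part will be proving $\widehat{W_\ell}\ge 0$. From the definition of $\chi_\ell$ one computes $\widehat{\chi_\ell}(k)=|\ell\triangle|^{-1}\int_{SO(3)}|\widehat{\mathbf{1}_{\ell\triangle}}(Rk)|^2\,dR\ge 0$; combined with $\widehat{1/|r|}(k)=4\pi/|k|^2$ this reduces the positivity of
$$\widehat{W_\ell}(k)=\frac{4\pi}{|k|^2}-(2\pi)^{-3}\Bigl(\widehat{\chi_\ell}*\frac{4\pi}{|\cdot|^2}\Bigr)(k)$$
to the pointwise convolution bound $(2\pi)^{-3}(\widehat{\chi_\ell}*|\cdot|^{-2})(k)\le|k|^{-2}$ for all $k\in\R^3$. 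This is where the hypothesis that $\triangle$ is a simplex (rather than a general convex body) enters essentially: for a simplex, $\widehat{\mathbf{1}_{\ell\triangle}}$ admits an explicit expression as a linear combination of plane waves based at the vertices, obtained by applying the divergence theorem to the oscillatory phase $e^{-ik\cdot x}$, and this vertex structure provides the rigidity needed to carry out the convolution estimate in the spirit of the Conlon-Lieb-Yau bound for balls. This simplex-specific Fourier-analytic estimate is the technical heart of the argument and the step I would expect to require the most work.
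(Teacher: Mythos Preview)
Your reduction—show that $W_\ell(r)=(1-\chi_\ell(r))/r$ is of positive type and bound $W_\ell(0)=O(1/\ell)$—is exactly what the paper sketches; the paper says nothing beyond this and defers the positive-type claim to Graf--Schenker. Your sum-of-squares identity for $1-\chi_\ell$ and the surface-area estimate yielding $W_\ell(0)=O(1/\ell)$ are correct and supply detail the paper omits.

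Where the proposal stalls is the plan for $\widehat{W_\ell}\ge0$. The convolution inequality $(2\pi)^{-3}(\widehat{\chi_\ell}*|\cdot|^{-2})(k)\le|k|^{-2}$ is not a reduction but a verbatim restatement of $\widehat{W_\ell}\ge0$, so nothing has been gained by rewriting it that way. The vertex expansion of $\widehat{{\bf 1}_\triangle}$ you invoke is a sum of oscillatory terms of the form $e^{-ik\cdot v_j}/\prod_{m\ne j}k\cdot(v_j-v_m)$ with denominators vanishing along hyperplanes; after taking absolute squares, spherically averaging, and convolving with $|p|^{-2}$ it carries no evident sign information, and the ``rigidity'' you appeal to is not a mechanism. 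The Graf--Schenker argument uses the simplex hypothesis differently: the relevant object is the autocorrelation $x\mapsto|\ell\triangle\cap(\ell\triangle+x)|$, and for a simplex this admits a decomposition over the faces into pieces each depending only on the component of $x$ normal to one face. After the rotation average these become explicit radial functions, and the positive-type property of $(1-\chi_\ell(r))/r$ reduces to a one-dimensional Fourier computation that can be carried out by hand. It is this face-wise (not vertex-wise) structure that is the missing idea.
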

This inequality follows by proving that the function 
$$
F(r,r')=\int_{\R^3\rtimes
  SO(3)}
    {\bf 1}_{g\ell\triangle}(r_i)
      {\bf 1}_{g\ell\triangle}(r_j)d\lambda(g),
$$
is of the form $F(r,r')= g(|r-r'|)$ where $g$ is such that 
$|r|^{-1}(1-g(|r|)$ has positive Fourier transform. Recall that 
for a function $f$ of positive type 
$$\sum_{1\leq i<j\leq N}Q_iQ_jf(r_i-r_j)
\geq -\sum_{j=1}^NQ_j^2f(0).$$ 

\section{Instability}\label{sec:instability}
\subsection{Examples of instability of the first kind}

As an example of a system that can show instability of the first kind
we consider two relativistic particles with masses $m_1=m_2=1$ and
charges $Q_1=-1$ and
$Q_2=Q>0$. 
The kinetic energy is given by (\ref{eq:pseudokin}) and we simply set
$\bA=0$. Thus the Hamiltonian is 
$$
H=\sqrt{-\Delta_1+1}-1+\sqrt{-\Delta_2+1}-1-\frac{Q}{|r_1-r_2|}
$$
acting on the smooth compactly supported functions in 
$L^2(\R^3)\otimes L^2(\R^3)$. Let $\psi\in C^\infty_0(\R^6)$ be
normalized, i.e., its square integral is one. 
and define $\psi_\ell(r_1,r_2)=\ell^{-3}\psi(r_1/\ell,r_2/\ell)$
for $\ell>0$. Note that $\psi_\ell$ is still normalized for all $\ell$.
Then 
$$
\langle \psi_\ell,H\psi_\ell\rangle =\ell^{-1}\left\langle
\psi,
\left(\sqrt{-\Delta_1+\ell^{2}}-\ell+\sqrt{-\Delta_2+\ell^{2}}-\ell
-\frac{Q}{|r_1-r_2|}
\right)\psi\right\rangle.
$$
Thus if we let $\ell$ tend to zero
$$
\lim_{\ell\to0}\ell\langle \psi_\ell,H\psi_\ell\rangle
= \left\langle
\psi,
\left(\sqrt{-\Delta_1}+\sqrt{-\Delta_2}
-\frac{Q}{|r_1-r_2|}
\right)\psi\right\rangle.
$$ 
If $Q$ is large enough we find that the right side is negative and
hence for such a $Q$
$$
\lim_{\ell\to0}\langle \psi_\ell,H\psi_\ell\rangle=-\infty
$$
and the system is not stable of the first kind. 

On the other hand, if the negatively charged particles belong to a
finite number of fermionic species and if the number of fermionic species,
the maximal negative charge and the maximal positive charge satisfy
appropriate bounds, then stability of matter holds
\cite{C,fdl,LSbook,lls-tf,ly}.

A similar situation happens for non-relativistic particles interacting
with magnetic fields according to the Pauli operator
(\ref{eq:paulikin}).
Consider as an example a Hamiltonian for two particles of mass $m=1$
and charges $Q_1=Q>0$ and $Q_2=-Q<0$:
\begin{eqnarray*}
  H(\bA)&=&\frac12(\bsigma\cdot(-i\nabla_1-Q\bA(r_1)))^2+
  \frac12(\bsigma\cdot(-i\nabla_2+Q\bA(r_2)))^2-\frac{Q^2}{|r_1-r_2|}
  \\&&+\frac1{8\pi}\int|\nabla\otimes\bA|^2,
\end{eqnarray*}
where we have chosen $\bE=0$ (which is the energetically best
choice). 
The instability in this case relies on the existence 
(see \cite{ES,loya}) of 
a non-zero $\widetilde\psi\in L^2(\R^3)$ and 
a magnetic field $\widetilde\bA$ with
$\int|\nabla\otimes\widetilde\bA|^2<\infty$
such that 
$$
\frac12(\bsigma\cdot(-i\nabla_1-\widetilde\bA(r_1)))^2\widetilde\psi=0.
$$
We may assume that $\widetilde \psi$ is normalized.
If for $\ell>0$ we set 
$$\psi_\ell(r_1,r_2)=
\ell^{-3}\widetilde\psi(r_1/\ell)\widetilde\psi(r_2/\ell)$$ 
(which is
also normalized)
and $\bA_\ell(r)=(Q\ell)^{-1}\widetilde\bA(r/\ell)$ we obtain for the
energy expectation
$$
\ell\langle \psi_\ell, H(\bA_\ell)\psi_\ell)=-
\left\langle\psi_{\ell=1},\frac{Q^2}{|r_1-r_2|}\psi_{\ell=1}\right\rangle
    +\frac1{8\pi Q^2}\int|\nabla\otimes\widetilde\bA|^2 
  .
$$
Again we see that if $Q$ is large enough the right side is
negative and hence for such a $Q$ we  have
as before  $\lim_{\ell\to\infty}\langle \psi_\ell,
H(\bA_\ell)\psi_\ell)=-\infty$.
As for the relativistic case stability of matter also holds in this
case under appropriate conditions \cite{fef,lls}.
This problem with a quantized field has been treated in
\cite{bfg,ffg}, the relativistic case with classical fields
is considered in \cite{lss},  and the relativistic case with quantized
field in \cite{lilo2}.

\subsection{Fermionic instability of the second kind}
As the final topic of these notes we will discuss
instability of the second kind. 

We will first make a very simple general remark about instability
of many-body systems with attractive interactions 
which has nothing to do with charged systems and holds even for
fermions.  
\begin{theorem}[Fermionic instability for attractive 2-body potentials]\hfill\\
 Assume that the potential $W:\R^n\to\R$ satisfies $W(r)\leq -c<0$ 
  for all $r$ in a ball around the origin. Consider the $N$-body operator
$$
H_N=\sum_{j=1}^N-\frac12\Delta_j+\sum_{1\leq i<j\leq N}W(r_i-r_j)
$$
acting in the fermionic Hilbert space $\bigwedge^N L^2(\R^n)$.
If $n\geq3$ then $H_N$
cannot be stable of the second kind, i.e., we can find a a sequence of
normalized vectors $\psi_N\in\bigwedge^N L^2(\R^n)$ such that
$$
\lim_{N\to\infty}N^{-1}\langle\psi_N,H_N\psi_N\rangle=-\infty.
$$
\end{theorem}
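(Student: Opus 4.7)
The strategy is to exhibit normalized fermionic trial states $\psi_N$ in which all $N$ particles are squeezed into a single small ball on which $W\leq -c$, and to show that the Pauli-principle cost in kinetic energy grows only like $N^{1+2/n}$ while the attractive pair energy grows like $N^2$. For $n\geq 3$ we have $1+2/n<2$, so the interaction wins and $N^{-1}\langle\psi_N,H_N\psi_N\rangle\to-\infty$.

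\textbf{Construction.} Pick $r_0>0$ such that $W(r)\leq -c$ for $|r|\leq r_0$, let $R=r_0/2$, and set $B=B(0,R)\subset\R^n$. Partition (a cube inscribed in) $B$ into $N$ pairwise disjoint congruent open cubes $C_1,\ldots,C_N$ of side length of order $RN^{-1/n}$; such a partition exists up to an $O(1)$ geometric constant because $|B|/|C_k|\sim N$. Let $\phi_k\in H^1_0(C_k)$ be the normalized Dirichlet ground state on $C_k$, extended by zero to $\R^n$; its kinetic energy satisfies $\frac12\|\nabla\phi_k\|_{L^2}^2\leq C_n R^{-2}N^{2/n}$. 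Because the supports are disjoint, the $\phi_k$ are $L^2$-orthonormal, so $\psi_N=\phi_1\wedge\cdots\wedge\phi_N$ is a normalized element of $\bigwedge^N L^2(\R^n)$ supported inside $B^N$.

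\textbf{Estimates and conclusion.} Fermionic antisymmetry reduces the kinetic expectation to a diagonal sum,
$$\Big\langle\psi_N,\sum_{j=1}^N-\frac12\Delta_j\,\psi_N\Big\rangle=\sum_{k=1}^N\frac12\|\nabla\phi_k\|^2\leq C_n R^{-2}N^{1+2/n},$$
while on the support of $\psi_N$ one has $|r_i-r_j|<2R=r_0$ and hence $W(r_i-r_j)\leq -c$, giving
$$\Big\langle\psi_N,\sum_{1\leq i<j\leq N}W(r_i-r_j)\,\psi_N\Big\rangle\leq -c\,\frac{N(N-1)}{2}.$$
Combining and dividing by $N$ yields
$$N^{-1}\langle\psi_N,H_N\psi_N\rangle\leq C_nR^{-2}N^{2/n}-\frac{c(N-1)}{2},$$
which tends to $-\infty$ because $n\geq 3$ forces $2/n<1$.

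\textbf{Where the hypothesis bites.} There is no real technical obstacle beyond the elementary choice of the cell partition; the role of the assumption $n\geq 3$ is precisely the inequality $N^{2/n}\ll N$, i.e.\ the fermionic kinetic density of a confined gas is asymptotically smaller than the per-particle attractive energy. For $n=1,2$ the two scales are comparable or reversed, and one cannot simply enlarge $R$ without violating $W\leq -c$, so the argument as it stands really is special to the case $n\geq 3$.
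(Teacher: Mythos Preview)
Your proof is correct and follows essentially the same strategy as the paper: a Slater-determinant trial state confined to a small ball, a kinetic bound of order $N^{1+2/n}$, and a pointwise potential bound of order $-cN^2$. The only cosmetic difference is that the paper builds its Slater determinant from the first $N$ Dirichlet eigenfunctions of a single cube (invoking the explicit eigenvalue sum $\sum_{j\leq N}\lambda_j\lesssim N^{(n+2)/n}R^{-2}$), whereas you use Dirichlet ground states on $N$ disjoint sub-cubes; both choices produce the same $N^{1+2/n}$ kinetic scaling and the remainder of the argument is identical.
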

\begin{proof}
Assume that $W(r)\leq -c<0$ on the ball of radius $R$ centered at the
origin. Define $\psi_N$ as the (normalized) Slater determinant
$$
\psi_N(r_1,\ldots,r_N)=(N!)^{-1/2}\det(u_i(r_j))_{i,j=1}^N
$$
where $u_j$, $j=1,\ldots,N$ are  orthonormalized eigenfunctions 
corresponding to the $N$ lowest eigenvalues of the negative
Laplacian with Dirichlet boundary conditions for the largest cube centered at
the origin and contained in the ball of radius $R$. 
We extend the functions to be 0 outside the cube.
The functions $u_j$ are explicit and can be written in terms of  
sines and cosines.  It is a simple straightforward calculation to show
that in all dimensions $n$ there is a constant $C_n$ such that 
$$
\langle\psi_N,\sum_{j=1}^N(-\frac12\Delta_j)\psi_N\rangle
\leq C_n N^{(n+2)/n}R^{-2}.
$$
Comparing with Corollary~\ref{cl:LT} (written for the case $n=3$)
we see that there is always a similar lower bound. 

Thus 
$$
 N^{-1}\langle\psi_N,H_N\psi_N\rangle\leq C_n N^{2/n}R^{-2}
-\frac12 (N-1)c.
$$
We see that instability occurs when $n>2$.
\end{proof}

\subsection{Instability of bosonic matter}
For matter consisting of charged particles 
we have discussed that the fermionic property ensures 
grand canonical stability. In this final section
we will show that the fermionic property is indeed 
a necessity as stability fails for bosons. 

We consider two species of bosons with masses
$m_\pm=1$, $Q_+=-Q_-=1$, $\bA=\bE_\perp=0$. We describe them by the
standard Schr\"odinger kinetic energy 
(\ref{eq:schkin}). If we have $N_+$ positively charged particles and 
 $N_-$ negatively charged particles we may write the Hamiltonian as
$$
  H_{N_+,N_-}=\sum_{j=1}^{N_++N_-}
  -\frac1{2}\Delta_j+\sum_{1\leq i<j\leq
    N_++N_-}\frac{e_ie_j}{|r_i-r_j|},
$$
where $e_j=1$ if $j=1,\ldots,N_+$ and 
 $e_j=-1$ if $j=N_++1,\ldots,N_++N_-$. 
The Hilbert space is 
$$
\cH_{N_+,N_-}=P^+_{N_+}\bigotimes^{N_+}L^2(\R^3)\otimes
P^+_{N_-}\bigotimes^{N_-} L^2(\R^3).$$
This system is not stable of the second kind, in fact, the energy
behaves like the number of particles to the $7/5$-th power. The
following precise asymptotics was conjectured by Dyson in \cite{dy}.
\begin{theorem}[Dyson's formula]\label{thm:dyson}\hfill\\
Let
$$
E(N)=\inf_{N_++N_-=N}\inf\{\,\langle \psi,H_{N_+,N_-}\psi\rangle\ |
\ \psi\in\cH_{N_+,N_-}\cap C_0^\infty(\R^{3(N_++N_-)}),\ \|\psi\|=1\}
$$
then as $N\to\infty$
\begin{equation}\label{eq:dyson}
  \lim_{N\to\infty}\frac{E(N)}{N^{7/5}}=\inf\biggl\{\mfr{1}{2}\int_{\R^3}|\nabla\Phi|^2-I_0\int_{\R^3} \Phi^{5/2}\ 
  \biggr|\ 0\leq \Phi,\ \int_{\R^3}\Phi^2=1\biggr\},
\end{equation}
with $I_0$ given by 
\begin{equation}\label{eq:I0def}
  I_0=(2/\pi)^{3/4}\int_0^\infty1+x^4-x^2\left(x^4+2\right)^{1/2}dx=
  \frac{4^{5/4}\Gamma(3/4)}{5\pi^{1/4}\Gamma(5/4)} .
\end{equation}
\end{theorem}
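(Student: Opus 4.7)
The plan is to establish matching upper and lower bounds on $E(N)/N^{7/5}$ that both converge to the variational infimum on the right-hand side of (\ref{eq:dyson}). The physical picture behind Dyson's conjecture is that for large $N$ the bosonic ground state is a two-component plasma in which each species forms a Bose condensate on a macroscopic length scale $N^{-1/5}$; the leading energy is the Hartree kinetic energy of the condensate wavefunction minus the Bogoliubov zero-point energy of the plasmon fluctuations on top of it, and the constant $I_0$ in (\ref{eq:I0def}) arises precisely as the momentum integral of the Bogoliubov plasmon dispersion.

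\textbf{Upper bound.} I would pick $\Phi\geq 0$, $\int\Phi^2=1$, near-minimizing the right-hand side of (\ref{eq:dyson}), and set $\phi_N(r)=N^{3/10}\Phi(N^{1/5}r)$, a normalized one-particle wavefunction supported on the scale $N^{-1/5}$. Taking $N_+=N_-=N/2$, I would build a trial state by applying a Bogoliubov rotation to the pair of coherent states condensed in $\phi_N$ for the two species and then project onto fixed total particle number. The Bogoliubov angles are chosen to diagonalize the effective quadratic two-component Hamiltonian obtained by expanding around the condensate. The contributing terms are the Hartree kinetic energy $\tfrac{N}{2}\int|\nabla\phi_N|^2$, the mutually cancelling direct Coulomb energies (by local neutrality, since $\phi_+=\phi_-$), and the Bogoliubov vacuum energy $-I_0 N^{5/4}\int\phi_N^{5/2}$. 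A scaling check shows the two surviving contributions sum to $N^{7/5}\bigl(\tfrac12\int|\nabla\Phi|^2-I_0\int\Phi^{5/2}\bigr)$; one must check that all remaining terms (cubic and quartic corrections, condensate depletion, the cost of projecting to fixed particle number) are $o(N^{7/5})$.

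\textbf{Lower bound.} This is the main direction. The strategy, following the approach of Lieb and Solovej for the charged Bose gas, is: (i) use a Graf-Schenker style sliding inequality (Theorem~\ref{Prop_Graf_Schenker}) to reduce to an average of local Hamiltonians on boxes or simplices of a carefully chosen side $\ell$; (ii) in each box, second-quantize the local Hamiltonian and isolate the lowest ``condensate'' mode of each species; (iii) carry out a c-number substitution (Berezin-Lieb / coherent state method) for the condensate creation/annihilation operators, paying an error subleading on the scale $N^{7/5}$; (iv) bound the remaining operator from below by diagonalizing its quadratic non-condensate part via a symplectic Bogoliubov transformation, obtaining the Bogoliubov vacuum energy as a Riemann sum of the plasmon dispersion; (v) pass to the limit of vanishing mesh and diverging box to recover the $-I_0\int\Phi^{5/2}$ term and combine with the Hartree kinetic energy.

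\textbf{Main obstacle.} The decisive step is controlling the cubic and quartic non-Bogoliubov remainders produced in (iii)--(iv). Unlike the fermionic stability proof, the Lieb-Thirring inequality (Theorem~\ref{thm:LT}) is unavailable, so one cannot bound the bosonic kinetic energy by the particle density alone; instead one must exploit local charge neutrality (to tame the long-wavelength Coulomb singularity) together with quantitative bounds on the condensate depletion in terms of the kinetic energy. A secondary obstacle is the interplay of three scales: the physical condensate length $N^{-1/5}$, the localization length $\ell$, and an ultraviolet cutoff separating ``Bogoliubov'' from ``high'' non-condensate modes. These must be tuned in a coordinated way so that every error term genuinely vanishes after division by $N^{7/5}$.
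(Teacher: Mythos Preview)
Your proposal is correct and follows essentially the same route as the paper: the upper bound via a Bogoliubov (quasi-free) trial state built on a condensate at scale $N^{-1/5}$ with pair excitations, and the lower bound (which the paper only cites to \cite{ls}) via localization, c-number substitution for the condensate, and a Bogoliubov lower bound on the resulting quadratic Hamiltonian. The only presentational difference is that the paper parametrizes the trial state directly by the pair $(\xi_0,\gamma_0)$, computes the energy exactly via Wick's theorem, and then minimizes over $\gamma_0$ semiclassically to extract $I_0$, rather than phrasing this as choosing Bogoliubov angles to diagonalize an effective quadratic Hamiltonian---but these two descriptions are equivalent.
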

{F}rom the Sobolev inequality (Theorem~\ref{thm:diasob})
it follows that the inf on the right of (\ref{eq:dyson}) is finite.
In \cite{dy} Dyson proved an upper bound on $E(N)$ of the form 
$-cN^{7/5}$ and thus indeed proved the instability of the second
kind. In \cite{CLY} a lower bound of the form $-CN^{7/5}$ was
established thus concluding that $7/5$ is the correct power.  
The theorem was finally proved in \cite{ls,so}.
In \cite{l79} Lieb proved that if the positively charged particles
have infinite mass then the energy is much smaller, indeed, 
bounded above by $-CN^{5/3}$ a corresponding lower bound had already
been proved in \cite{dl1,dl2}.

The proof of Theorem~\ref{thm:dyson} relies on an application of
Bogolubov's theory of superfluidity \cite{BO}. The charged system, in
fact, forms a superfluid state. 

Dyson's formula (\ref{eq:dyson}) is proved by establishing the
corresponding two inequalities. Establishing the lower bound is
technically very involved and is beyond the scope of these notes. It
is the content of the paper \cite{ls}. We will here give a brief
sketch of the proof of the upper bound from \cite{so}.
The upper bound is proved by finding an appropriate trial state. 
Here we are guided by Bogolubov's theory. 

It turns out that it is significantly easier to write down a 
grand canonical trial state than a canonical state. 
We are, however, interested in a canonical state. This will not be
a serious problem as we will eventually be able to show that the state
we construct is sharply peaked around the average 
particle number. We will ignore this point here and simply work with the
grand canonical state. We refer the reader to \cite{so} for 
details.

Another simplification is to consider the two species of bosons as one
species with two internal degrees of freedom corresponding to the two
signs of the charge. Constructing a trial state in this space will
correspond to averaging over states with different numbers of
positively and negatively charged particles.

We are thus considering the Fock space 
$\cF^+=\cF^+(L^2(\R^3)^2)$. We write a function $f\in L^2(\R^3)^2$, as
$f=f(r,e)$, where $e=\pm1$ is the sign of the charge. 
Let $|{\bf 0}\rangle$ be the vacuum vector in $\cF^+$.

In constructing a bosonic trial state the first guess is to put all
particles in the same one-particle state, i.e., 
to have a condensate. Let this state be represented by the
(normalized) vector
$\xi\in L^2(\R^3)^2$. Introduce first the normalized grand canonical 
vector
$$
|\Xi\rangle=\exp\left(-\frac{N}{2}+\sqrt{N}a_+^*(\xi)\right)|{\bf
  0}\rangle
=\sum_{n=0}^\infty e^{-N/2}\frac{N^{n/2}}{n!}a_+^*(\xi)^n|{\bf
  0}\rangle.
$$
The corresponding state is an average over states with varying
occupation in the condensate $\xi$. The average particle number in
$\xi$ is $ \langle\Xi|a_+^*(\xi)a_+(\xi)|\Xi\rangle=N$ and the
variance is also
$$
\langle\Xi|
(a_+^*(\xi)a_+(\xi))^2|\Xi\rangle-
\langle\Xi|a_+^*(\xi)a_+(\xi)|\Xi\rangle^2
=N.
$$
Thus this state is peaked around particle number $N$ with a standard
deviation $\sqrt{N}$. 

There is a unitary operator $U$ on $\cF^+$ such that 
$$
U^*a^*_+(f)U=a^*_+(f)+\sqrt{N}\langle\xi,f_\alpha\rangle.
$$
Using this unitary we may also write $|\Xi\rangle=U|{\bf
  0}\rangle$.

A pure condensate like this will however not give
the correct state. It is important to build pair excitations too. 
This is achieved as follows. Let $\{f_\alpha\}_{\alpha=0}^\infty$ be an
orthonormal family in $L^2(\R^3)^2$ (they will represent the pair
states). The normalized vector $\Psi\in\cF^+$ representing our trial
state may be abstractly
written
\begin{eqnarray}\label{eq:Bogstate}
  \Psi&=&\prod_{\alpha=0}^\infty(1-\lambda_\alpha^2)^{1/4}
  \exp\left(\sum_{\alpha=0}^\infty-\frac{\lambda_\alpha}{2}\left(a_+^*(f_\alpha)
      -\sqrt{N}\langle\xi,f_\alpha\rangle\right)^2\right)|\Xi\rangle\\
  &=&U\prod_{\alpha=0}^\infty(1-\lambda_\alpha^2)^{1/4}
  \exp\left(\sum_{\alpha=0}^\infty-\frac{\lambda_\alpha}{2}a_+^*(f_\alpha)^2
     \right)|{\bf
  0}\rangle.
  \nonumber
\end{eqnarray} 
We have introduced parameters 
$0<\lambda_\alpha<1$ with $\sum_{\alpha=0}^\infty\lambda_\alpha^2<\infty$
to control the occupations in the pair states. 
For simplicity we will assume that $\xi$ and
$\{f_\alpha\}_{\alpha=0}^\infty$ are real functions. 

We encode the information about the pair states in the positive
semi-definite trace class operator on $L^2(\R^3)^2$
\begin{equation}
\gamma=\sum_{\alpha=0}^\infty\frac{\lambda_\alpha^2}{1-\lambda_\alpha^2}
|f_\alpha\rangle\langle f_\alpha|.
\end{equation}
In terms of this operator a lengthy but straightforward calculation
shows that 
$$
\left\langle\Psi, (a_+^*(f)-\sqrt{N}\langle\xi,f\rangle)
 (a_+(g)-\sqrt{N}\langle g,\xi\rangle)\Psi\right\rangle
=\langle g,\gamma f\rangle,
$$
(the inner product on the left is in $\cF_+$ and the one on the right
is in $L^2(\R^3)^2$)
and 
$\left\langle\Psi, (a_+^*(f)-\sqrt{N}\langle\xi,f\rangle)\Psi\right\rangle=0$.
In particular,
$$
\left\langle\Psi, a_+^*(f)a_+(g)\Psi\right\rangle=\langle
g,(N|\xi\rangle\langle\xi|
+\gamma) f\rangle.
$$
Or equivalently using the field operators from
Section~\ref{sec:2ndquant}
\begin{equation}\label{eq:Psi1pdm}
  \left\langle\Psi, \phi(r,e)^*\phi(r',e')\Psi\right\rangle=N\xi(r,e)\xi(r',e')
  +\gamma(r,e;r',e')
\end{equation}
where $\gamma(r,e;r',e')$ is the integral kernel of $\gamma$. 
Likewise,
$$
\left\langle\Psi, a_+^*(f)a_+^*(g)\Psi\right\rangle=\left\langle
g,\left(N|\xi\rangle\langle\xi|
-\sqrt{\gamma(\gamma+1)}\right) f\right\rangle,
$$
or
\begin{equation}\label{eq:Psi1pdmoff}
  \left\langle\Psi, \phi(r,e)^*\phi(r',e')^*\Psi\right\rangle=
  N\xi(r,e)\xi(r',e')
  -\sqrt{\gamma(\gamma+1)}(r,e;r',e').
\end{equation}
Moreover, the state represented by $\Psi$ satisfies Wick's formula,
which for the 4-point function reads
\begin{eqnarray*}
  \lefteqn{\left\langle\Psi,\prod_{j=1}^4(a_+^\#(g_j)-\sqrt{N}\langle
      g_j,\xi\rangle^\#)\Psi\right\rangle}&&\\&=& 
  \left\langle\Psi,\prod_{j=1,2}(a_+^\#(g_j)- \sqrt{N}\langle g_j,\xi\rangle^\#)\Psi\right\rangle
  \left\langle\Psi,\prod_{j=3,4}(a_+^\#(g_j)- \sqrt{N}\langle g_j,\xi\rangle^\#)
    \Psi\right\rangle\\
  &&+\left\langle\Psi,\prod_{j=1,3}(a_+^\#(g_j)- \sqrt{N}\langle g_j,\xi\rangle^\#)\Psi\right\rangle
  \left\langle\Psi,\prod_{j=2,4}(a_+^\#(g_j)- \sqrt{N}\langle g_j,\xi\rangle^\#)
    \Psi\right\rangle\\
  &&+\left\langle\Psi,\prod_{j=1,4}(a_+^\#(g_j)- \sqrt{N}\langle g_j,\xi\rangle^\#)\Psi\right\rangle
  \left\langle\Psi,\prod_{j=2,3}(a_+^\#(g_j)- \sqrt{N}\langle g_j,\xi\rangle^\#)
    \Psi\right\rangle.
\end{eqnarray*}
Here $\#$ refers to either a $*$ (interpreted as complex conjugation
on scalars) or no $*$.
In particular, since $\xi$ is real this gives 
\begin{eqnarray}
  \lefteqn{\Big\langle\Psi,\big(\phi(r,e)^*-\sqrt{N}\xi(r,e)\big)\big(\phi(r',e')^*-\sqrt{N}\xi(r',e')\big)
    }\hspace{3cm}&&\nonumber\\&\times&
    \big(\phi(r',e')-\sqrt{N}\xi(r',e')\big)\big(\phi(r,e)-\sqrt{N}\xi(r,e)\big)
    \Psi\Big\rangle\nonumber\\
    &=&|\sqrt{\gamma(\gamma+1)}(r,e;r',e')|^2+|\gamma(r,e;r',e')|^2\nonumber
    \\&&
    +\gamma(r,e;r,e)\gamma(r',e';r',e')\label{eq:Psi4point}
\end{eqnarray}
Armed with these identities we can calculate the expectation of the
energy in the state represented by $\Psi$. 

First we will explain, for the special case of the charged Bose
system, how to choose the condensate function $\xi$ and
the trace class operator $\gamma$. More precisely, we will specify
their charge dependence. We set
\begin{equation}\label{eq:xi0}
  \xi(r,e)=\sqrt{\frac12}\xi_0(r),
\end{equation}
where $\xi_0$ is a real normalized function in $L^2(\R^3)$.
Thus the condensate function does not depend on the charge. 
The operator $\gamma$ on $L^2(\R^3)^2=L^2(\R^3)\otimes\C^2$ will
be chosen to have the form
$$
\gamma=\gamma_0\otimes\frac12\left(\begin{array}{cc}
1&-1\\-1&1
  \end{array}\right),
  $$
where $\gamma_0$ is a positive trace-class operator on $L^2(\R^3)$.
Put differently, the integral kernel of $\gamma$ is chosen to be
\begin{equation}\label{eq:gamma0}
\gamma(r,e;r',e')=\frac12 ee'\gamma_0(r;r').
\end{equation}
The charge part of this operator is a rank one operator and thus we
also have
$$
\sqrt{\gamma(\gamma+1)}=\sqrt{\gamma_0(\gamma_0+1)}\otimes\frac12\left(\begin{array}{cc}
1&-1\\-1&1
  \end{array}\right).
$$
It is now straightforward to calculate the expectation of the Coulomb
potential in the state
represented by $\Psi$. {F}rom (\ref{eq:Psi1pdm}--\ref{eq:gamma0}) we obtain
\begin{eqnarray*}
\lefteqn{\left\langle\Psi,\bigoplus_{M=0}^\infty
\sum_{1\leq i<j\leq M}\frac{e_ie_j}{|r_i-r_j|}\Psi\right\rangle=}&&\\
&&
\left\langle\Psi,\frac12\sum_{ee'=\pm1}\iint ee'\phi(r,e)^*\phi(r',e')^*|r-r'|^{-1}
\phi(r,e)\phi(r',e')\Psi\right\rangle
\\&=&N\Tr_{L^2(\R^3)}\left(\cK(\gamma_0-\sqrt{\gamma_0(\gamma_0+1)})\right).
\end{eqnarray*} 
Here $\cK$ is the operator with integral kernel
$$
\cK(r,r')=\xi_0(r)|r-r'|^{-1}\xi_0(r').
$$
The total energy expectation is 
\begin{eqnarray*}
  \Big\langle\Psi,\bigoplus_{N_+,N_-=0}^\infty
  H_{N_+,N_-}\Psi\Big\rangle&=&\frac{N}2\int|\nabla\xi_0|^2\\&&
  +\frac12\Tr(-\Delta\gamma_0)
  +N\Tr\Big(\cK\big(\gamma_0-\sqrt{\gamma_0(\gamma_0+1)}\big)\Big).
\end{eqnarray*}
The final step in the argument is to minimize the above expression
over $\gamma_0$. More precisely this is done in a semiclassical
approximation. We will only sketch this argument. The rigorous
argument can again be found in \cite{so}. 
 We assume that $\gamma_0$ is the quantization of a classical symbol 
$f(r,p)\geq0$. The semiclassical approximation to the energy is then
\begin{eqnarray*}
\lefteqn{\frac{N}2\int|\nabla\xi_0|^2}&&\hspace{3cm}\\&&
+(2\pi)^{-3}\iint \frac{p^2}{2}f(r,p)+4\pi N|p|^{-2}\xi_0(r)^2\big(f(r,p)-\sqrt{f(r,p)(f(r,p)+1)}\big)drdp.
\end{eqnarray*}
Minimizing this expression over $f(r,p)$ and performing the $p$
integration gives
$$
\frac{N}2\int|\nabla\xi_0|^2-I_0N^{5/4}\int\xi_0(r)^{5/2}dr,
$$
where $I_0$ is given in (\ref{eq:I0def}).
If we introduce the rescaling $\xi_0(r)=N^{3/10}\Phi(N^{1/5}r)$, where
$\Phi$ is also normalized then the energy expression above becomes
$$
N^{7/5}\Big(\int|\nabla\Phi|^2-I_0\int\Phi^{5/2}\Big),
$$
which is exactly the expression conjectured by Dyson for the energy.

Note that the instability is also reflected in the shrinking of the
linear dimension of the state with increasing $N$. According to the
scaling of $\xi_0$ above, the linear dimension of the state behaves
like $\sim N^{-1/5}$.

\end{document}